\newcounter{protocol}
\newenvironment{protocol}[1][htb]{%
  \let\c@algorithm\c@protocol
  \renewcommand{\ALG@name}{Protocol}%
  \begin{algorithm}[#1]%
  }{\end{algorithm}
}
\theoremstyle{plain}
\newtheorem{theorem}{Theorem}[section]
\newtheorem{corollary}[theorem]{Corollary}
\newtheorem{observation}[theorem]{Observation}
\theoremstyle{definition}
\newtheorem{definition}[theorem]{Definition}
\theoremstyle{remark}
\crefname{definition}{definition}{definitions}
\Crefname{definition}{Definition}{Definitions}
\crefname{prop}{proposition}{propositions}
\Crefname{Prop}{Proposition}{Propositions}
\Crefname{cor}{Corollary}{Corollaries}
\crefname{lemma}{Lemma}{Lemmas}
\crefname{section}{Section}{Sections}
\crefname{subsubsubsection}{Section}{Sections}
\crefname{remark}{Remark}{Remarks}
\crefname{figure}{Figure}{Figures}
\crefname{table}{Table}{Tables}
\crefname{lemma}{lemma}{lemmas}
\Crefname{Lemma}{Lemma}{Lemmas}
\crefname{theorem}{Theorem}{Theorems}
\Crefname{theorem}{Theorem}{Theorems}
\crefname{algo}{Alg.}{Algorithms}
\crefname{assumption}{assumption}{assumptions}
\Crefname{assumption}{Assumption}{Assumptions}
\crefname{example}{example}{examples}
\Crefname{example}{Example}{Examples}
\crefname{remark}{remark}{remarks}
\Crefname{remark}{Remark}{Remarks}
\crefname{protocol}{protocol}{protocols}
\Crefname{protocol}{Protocol}{Protocols}
\newcommand\blfootnote[1]{
    \begingroup
    \renewcommand\thefootnote{}\footnote{#1}
    \addtocounter{footnote}{-1}
    \endgroup
}
\title{
Should Decision-Makers Reveal Classifiers \\in Online Strategic Classification?
}
\author{
  Han Shao\textsuperscript{*}\\Harvard University\\
  \small\texttt{han@cmsa.fas.harvard.edu}
  \and
  Shuo Xie\textsuperscript{*}\\
  TTIC\\
  \small\texttt{shuox@ttic.edu}
  \and
  Kunhe Yang\textsuperscript{*}\\
  UC Berkeley\\
  \small\texttt{kunheyang@berkeley.edu}
}
\date{}
\DeclareMathOperator*{\argmax}{arg\,max}
\newcommand{\NN}{{\mathbb N}}
\newcommand{\1}[1]{\mathds{1}_{#1}}
\newcommand{\ind}[1]{\mathds{1}[#1]}
\newcommand{\cA}{\mathcal{A}}
\newcommand{\cE}{\mathcal{E}}
\newcommand{\cH}{\mathcal{H}}
\newcommand{\cM}{\mathcal{M}}
\newcommand{\cV}{\mathcal{V}}
\newcommand{\cX}{\mathcal{X}}
\newcommand{\cY}{\mathcal{Y}}
\newcommand{\eps}{\varepsilon}
\renewcommand{\epsilon}{\varepsilon}
\renewcommand{\hat}{\widehat}
\renewcommand{\tilde}{\widetilde}
\renewcommand{\bar}{\overline}
\newcommand{\nothere}[1]{}
\newcommand{\ld}{\mathrm{Ldim}}
\newcommand{\BR}{\mathsf{BR}}
\newcommand{\Nout}{N_{\text{out}}}
\newcommand{\Nin}{N_{\text{in}}}
\newcommand{\kout}{k_{\text{out}}}
\newcommand{\kin}{k_{\text{in}}}
\newcommand{\hstar}{h^{\star}}
\newcommand{\mistake}{\cM}
\newcommand{\discountedh}{\tilde{h}_{t}^{\gamma}}
\newcommand{\unnormalizeddiscountedh}{\bar{h}_{t}^{\gamma}}
\newcommand{\fullinfostandard}{\mathsf{Revealed}\text{-}\mathsf{Std}}
\newcommand{\fullinfoadversarial}{\mathsf{Revealed}\text{-}\mathsf{Arb}}
\newcommand{\noinfo}{\gamma\text{-}\mathsf{Weighted}}
\begin{document}

\maketitle
\blfootnote{\textsuperscript{*}Authors are ordered alphabetically.}

\begin{abstract}
Strategic classification addresses a learning problem where a decision-maker implements a classifier over agents who may manipulate their features in order to receive favorable predictions. In the standard model of online strategic classification, in each round, the decision-maker implements and publicly reveals a classifier, after which agents perfectly best respond based on this knowledge. However, in practice, whether to disclose the classifier is often debated---some decision-makers believe that hiding the classifier can prevent misclassification errors caused by manipulation.
In this paper, we formally examine how limiting the agents' access to the current classifier affects the decision-maker's performance. 
Specifically, we consider an extended online strategic classification setting where agents lack direct knowledge about the current classifier and instead manipulate based on a weighted average of historically implemented classifiers. 
Our main result shows that in this setting, 
the decision-maker incurs $(1-\gamma)^{-1}$ or $k_{\text{in}}$ times more mistakes compared to the full-knowledge setting, where $k_{\text{in}}$ is the maximum in-degree of the manipulation graph (representing how many distinct feature vectors can be manipulated to appear as a single one), and $\gamma$ is the discount factor indicating agents' memory of past classifiers. Our results demonstrate how withholding access to the classifier can backfire and degrade the decision-maker's performance in online strategic classification.
  \end{abstract}
  
  \section{Introduction}\label{sec:intro}
In online strategic classification, a decision-maker makes decisions over a sequence of agents who may manipulate their features to receive favorable outcomes~\citep{bruckner2011stackelberg, hardt2016strategic, ahmadi2023fundamental}. For example, in college admissions, when a decision-maker evaluates applicants, students may retake the SAT, switch schools, or enroll in easier classes to boost their GPAs in hopes of gaining admission. Similarly, in loan approval, where a classifier assesses applicants based on their credit scores, individuals may open or close credit cards or bank accounts to improve their credit scores.

More formally, an agent $(x,y)$ consists of a pre-manipulation feature vector $x$ and a ground-truth label $y$. When modeling strategic manipulation, we assume that each agent has limited manipulation power. We adopt manipulation graphs, initially introduced by \citet{zhang2021incentive}, to represent feasible manipulations. In this graph, the nodes correspond to all possible feature vectors. There is an edge from $x$ to $x'$ if and only if an agent with the feature vector $x$ can manipulate their feature vector to $x'$. Additionally, there is always a self-loop at each node, as agents can choose to remain at $x$ by doing nothing.

A sequence of agents $(x_1,y_1),(x_2,y_2),\ldots$ arrives sequentially. In each round $t$, the decision-maker implements a classifier $h_t$, and the agent $(x_t,y_t)$ manipulates their feature vector from $x_t$ to a neighbor $v_t$ in an attempt to receive a positive prediction under $h_t$.
In prior work, it is usually assumed that the decision-maker is transparent—they reveal their current classifier to the agents before manipulation—and that agents are rational—they best respond by manipulating to a reachable feature that is labeled as positive (if one exists) in the most cost-efficient way. Specifically, if all their neighbors are labeled as negative, they remain at their current feature without making any changes, as manipulation would not help change the prediction from negative to positive. Many algorithms developed in prior work \citep[e.g.][]{ahmadi2023fundamental, strategic-littlestone24,cohen2024learnability} heavily rely on this standard tie-breaking assumption.

However, in real-world applications, the decision-maker may withhold their classifier to prevent agents from manipulating their features. Nevertheless, even without knowledge of the current classifier, agents will still attempt to optimize their outcomes through strategic manipulation. This raises a fundamental question:
\begin{center}
    \textbf{Can decision-makers benefit from hiding the classifiers?}
\end{center}
More specifically, we assume that agents only have knowledge of previously implemented classifiers when they respond. For example, in centralized college admissions, universities often publish past cutoff scores, revealing previous classifiers.

In this setting, we consider natural models of agents' behavior that exhibit heuristic rationality. One natural approach is for agents to best respond to the classifier from the previous round. For example, student applicants may assume that the admission rule will closely resemble the one used last year. Another heuristic method is to best respond to the average of historically implemented classifiers. We unify these approaches by considering a general framework where agents best respond to a weighted average of historical classifiers. 

Under this model, we answer our central question negatively. We show that when agents best respond to a weighted average of past classifiers, the decision-maker incurs $(1-\gamma)^{-1}$ or $\kin$ times more mistakes, where \( k_{\text{in}} \) is the maximum in-degree of the manipulation graph and \( \gamma \in (0,1) \) is the discount factor representing agents' memory of past classifiers. This implies that the decision-maker may not benefit from hiding the classifiers but incur a substantial increase in the mistake bound.

To better understand this problem, we first analyze a simple base case by removing the standard tie-breaking assumption—that agents remain at their original feature vector \( x \) when their entire neighborhood is labeled as negative. We show that removing this assumption can already hurt the decision-maker, increasing the mistake bound by a factor of \( \Theta(k_{\text{in}}) \). This not only serves as a building block but also highlights that even a small perturbation in agents' perceived classifiers can significantly impact the mistakes made by the decision-maker.

\paragraph{Our Contributions.} We summarize our contributions as follows.
\begin{itemize}
    \item When agents best respond to the current classifier with arbitrary tie-breaking, we provide a lower bound on the mistake bound, showing that the decision-maker will make \(\Omega(k_\text{in})\) times more mistakes. We also propose an algorithm that achieves a matching upper bound.
    \item When agents best respond to the weighted average of history with arbitrary tie-breaking, we establish a lower bound on the mistake bound, showing that the decision-maker will make $(1-\gamma)^{-1}$ or $\kin$ times more mistakes. Additionally, we propose an algorithm with mistake bound of \(O( (1-\gamma)^{-1} \cdot \kin)\).
    \item We also explore the scenario where agents run online learning algorithms, and discuss the challenge of modeling learning agents through diminishing regret due to their nonstatic and large action space.
\end{itemize}

\subsection{Related Work} 
The research on strategic machine learning---which focuses on designing machine learning algorithms that remain robust in the presence of strategic behaviors---dates back to the works of~\citet{dalvi2004adversarial,bruckner2011stackelberg,dekel2010incentive}. Within this field, \emph{strategic classification} was first studied by \citet{hardt2016strategic}, who examines the accuracy of classification tasks when agents can modify their features to receive more favorable outcomes.
Overall, our work contributes to the literature of strategic classification in the following three aspects.

\paragraph{(I) Learnability in the online setting.}
Our work focuses on the \emph{online} setting where the decision-maker (also called the \emph{learner}) aims to make irrevocable classification decisions to a sequence of agents arriving one by one. Among the prior works that analyze the 
mistake bound/Stackelberg regret in such settings~\citep{dong2018strategic,chen2020learning,ahmadi2021strategic,ahmadi2023fundamental,strategic-littlestone24,shao2023strategic,cohen2024learnability}, our work is most closely related to that of \citet{ahmadi2023fundamental,strategic-littlestone24,cohen2024learnability}. These works model the agents' ground-truth labels as induced by graph-based manipulations~\citep{zhang2021incentive,lechner2022learning} toward an unknown hypothesis within a given hypothesis class with bounded complexity, an assumption analogous to \emph{realizability} in classical online learning~\citep{littlestone1988learning}.
Building on this framework, our work is the first to 
analyze how the mistake bound changes under alternative models of agent behaviors, specifically when agents best respond to a weighted sum of historical classifiers rather than the current classifier.

\paragraph{(II) Agent behavioral models beyond best response.}
There have also been recent works exploring alternative agent behaviors beyond the simple best-response model, including
noisy response~\citep{jagadeesan2021alternative},
gradient descent~\citep{zrnic2021leads}, and non-myopic agents optimizing for discounted future rewards~\citep{haghtalab2022learning}. In contrast, our settings assume that agents cannot observe the current classifier but manipulate based on historical observations. \citet{xie2024automating} also assumes agents best respond to previous decision policy but they focus on welfare and fairness. We compare the learner’s optimal mistake bound both when the classifier is revealed and when it is withheld.
More broadly, these studies fit within the context of repeated Stackelberg (principal-agent) games with learning agents, ~\citep[e.g.][]{haghtalab2022learning,haghtalab2024calibrated,blum2014lazy}, where recent work focuses on designing principal's algorithms to strategize against agents who employ specific online learning algorithms such as no-regret algorithms~\citep{braverman2018selling,deng2019strategizing,mansour2022strategizing,fiez2020implicit,brown2024learning}.
While these works generally assume known or fixed/stochastic games, part of our main challenge arises from the fact that agents' initial features are unknown and chosen by an online adversary. Furthermore, as we will discuss in \Cref{sec:learning-agents}, defining learning agents in strategic classification presents unique challenges due to the dynamic and large action spaces.

\paragraph{(III) Decision-maker's strategy beyond full transparency.}
Our study studies the decision-maker’s choice between making the classifiers fully transparent or withholding information.
Prior works have explored other forms of partial transparency mechanisms in the \emph{offline} setting, such as releasing a subset of potential classifiers that includes the actual implemented classifier~\citep{cohen2024bayesian}, 
incorporating randomness and noise~\citep{Braverman20}, 
withholding the classifier so that agents employ imitation and social learning~\citep{Ghalme21,bechavod2022information,akyol2016price}, providing feedback on the classifier with varying levels of accuracy~\citep{barsotti2022transparency}, and revealing counterfactual explanations instead of the classifier itself~\citep{tsirtsis2020decisions}. Our main contribution to this thread is studying the \emph{online} interaction between the decision-maker and a stream of agents, and to quantitatively characterize the impact of information disclosure strategies on the decision-maker's performance.

While our work also focuses on the accuracy perspective, many works have explored various other perspectives of strategic classification, such as encouraging genuine improvements versus discouraging ``gaming'' \citep[e.g.][]{ahmadi2022classification,bechavod2021gaming,haghtalab2020maximizing,Liu20}, understanding the causal implications \citep[e.g.][]{miller2020strategic,bechavod2021gaming,shavit2020causal,performative2020}, and fairness concerns \citep[e.g.][]{hu2019disparate,Liu20,milli2019social}.

\section{Model}\label{sec:model}
Throughout this work, we focus on the binary classification task in the online setting.
Let $\cX$ denote the feature vector space and $\cY=\{0,1\}$ denote the label space. A hypothesis $h: \cX\mapsto \cY$ (also called a classifier or a predictor) is a function that maps feature vectors to labels. We denote by $\cH\subset \cY^\cX$ the hypothesis class. A fractional predictor $f: \cX\mapsto [0,1]$ is a function that maps feature vectors to the probability of being labeled as $1$. 

An example $(x,y)\in\cX\times\cY$, referred to as an \textit{agent} in this context, consists of a pre-manipulation feature vector $x$ and a ground-truth label $y$. We consider the task of sequential classification where the decision-maker (aka learner) aims to classify a sequence of agents $(x_1,y_1),(x_2,y_2),\ldots$ that arrives in an online manner. In each round $t$, the decision-maker implements a classifier $h_t$, and the agent $(x_t,y_t)$ manipulates their feature vector from $x_t$ to $v_t$ with an attempt to receive a positive prediction under $h_t$.  The interaction between the decision-maker and the agents (which repeats for $t=1,\ldots,T$) is described in Protocol 1. 

\begin{protocol}[htbp]
\label[protocol]{prot:interaction}
    \caption{Decision-Maker/Agent Interaction}    
        \begin{algorithmic}[1]
        \FOR{$t=1,\ldots, T$}
        \STATE The learner implements a classifier $h_t\in \cY^\cX$.
        \STATE The environment picks an agent $(x_t,y_t)$.
            \STATE The agent manipulates from $x_t$ to $v_t$.
             \STATE The learner observes the post-manipulation feature vector $v_t$ and the true label $y_t$, and makes a mistake if $h_t(v_t)\neq y_t$.
        \ENDFOR
        \end{algorithmic}
    \end{protocol}

\paragraph{Manipulation Graph.} Agents will try to maximize their chance of receiving a positive prediction by modifying their feature vector to some reachable feature vectors. 
Following prior works \citep[e.g.][]{zhang2021incentive,lechner2022learning,ahmadi2023fundamental,lechner2023strategic}, we model the set of reachable feature vectors through a manipulation graph $G = (\cX, \cE)$, where the nodes are all feature vectors in $\cX$. 
For any two feature vectors $x, x'\in \cX$, there is a directed edge from $x$ to $x'$---i.e., $(x,x')\in \cE$---if and only if an agent with initial feature vector $x$ can manipulate to $x'$.
Additionally, there is always a self-loop at each node $x\in \cX$, i.e., $\forall (x,x)\in \cE$, as agents can choose to remain at $x$ by doing nothing.

For each node $x\in \cX$, let $N_\text{out}[x]=\{x'|(x,x')\in \cE\}$ and $N_\text{in}[x] = \{x'|(x',x)\in \cE\}$ denote the out- and in-neighborhoods of $x$ in the manipulation graph $G$. Let $k_{\text{out}} = \sup_{x\in \cX} |N_\text{out}[x]|$ and $k_{\text{in}}=\sup_{x\in \cX} |N_\text{in}[x]|$ denote the maximum out-degree and maximum in-degree of the manipulation graph, respectively.

\paragraph{Agent Behavior Models.}
At each round $t$, an agent manipulates their feature vector from $x_t$ to a reachable vector $v_t\in \Nout[x_t]$ with the intent to maximize their probability of receiving a positive prediction. Concretely, the agent either observes $h_t$ or forms an estimate $\tilde{h}_t$ based on the information available to them, and then chooses $v_t\in \Nout[x_t]$ among those that maximize $h_t(v)$ or $\tilde{h}_t(v)$.
We formalize the set of such best response manipulations as the \emph{best response set}:

\begin{definition}[Best Response Set]\label{def:br}
    Given a (potentially fractional) predictor $h\in [0,1]^\cX$, the best response function $\BR_h:\cX\mapsto 2^{\cX}$ will map $x$ to the set of reachable neighbors with the highest predicted value, i.e.,
    \[\BR_h(x) := \text{argmax}_{v\in N_{\text{out}}[x]} h(v)\,.\]
\end{definition}
We consider three models of agent behaviors based on (1) how they break ties when there are multiple options inside $\BR_h(x)$, and (2) how agents form their estimate $\tilde{h}_t$. 

\begin{itemize}
    \item \textbf{($\fullinfostandard$) Revealed Classifier + Best-Response with Standard Tie-breaking.} 
    This is the standard setting in the literature of strategic classification, e.g.,~\citet{ahmadi2023fundamental,cohen2024learnability,strategic-littlestone24}.
    In this case, the decision-maker reveals the implemented 0-1 classifier $h_t\in\{0,1\}^{\cX}$ to the agent in each round. The agent best responds to $h_t$ by choosing $v_t\in\BR_{h_t}(x_t)$.
    Specifically, if every reachable feature vector in $\Nout[x_t]$ is labeled as $0$ under $h_t$---i.e., when $h_t(\BR_{h_t}(x_t))=0$---the agent chooses $v_t=x_t$ as they have no incentive to manipulate if manipulation does not yield a positive classification. Otherwise, if there are neighbors labeled as $1$, ties among them are broken arbitrarily.
    
    \item \textbf{($\fullinfoadversarial$) Revealed Classifier + Best-Response with Arbitrary Tie-breaking.}
    As a variation of $\fullinfostandard$, in $\fullinfoadversarial$, the agents still observe $h_t$ and best responds to it, but break ties arbitrarily among the vectors in $\BR_{h_t}(x_t)$\footnote{The tie-breaking may be adversarially chosen to induce the maximum number of mistakes. However, this differs from adversarial perturbations studied in robustness literature (e.g., \citep{montasser2019vc}), since the agent only chooses from vectors that maximize $h_t$.}.
    In particular, even if all neighbors in $\Nout[x_t]$ are labeled negative, the agent may still choose to move rather than stay, due to the arbitrary nature of tie-breaking.
    We introduce this model as an intermediary step toward our main setting, where agents do not observe $h_t$ directly, but instead best respond to an estimate $\tilde{h}_t$. Estimation errors can make some out-neighbors appear more favorable\,---\,even when all are actually negative under $h_t$\,---\,so agent’s best response to $\tilde{h}_t$ may resemble a best response to $h_t$ under arbitrary tie-breaking. This motivates us to consider the $\fullinfoadversarial$ setting.

\item \textbf{($\noinfo$) Unrevealed Classifier Estimated via Weighted Sum of History.}
This is the main setting of this paper. In each round $t$, the agent cannot observe the implemented classifier $h_t$, but only has access to the historically implemented classifiers $h_1, \ldots, h_{t-1}$. As a natural heuristic, the agent estimates $h_t$ by taking a weighted average of these historical classifiers, 
\begin{equation*}\label{eq:discountedh}
    \discountedh = 
\frac{\sum_{\tau=0}^{t-2} \gamma^{\tau} h_{t-1-\tau}}{\sum_{\tau=0}^{t-2} \gamma^{\tau} }=
\frac{1-\gamma}{1-\gamma^{t-1}}\cdot\sum_{\tau=0}^{t-2} \gamma^{\tau} h_{t-1-\tau},
\end{equation*}

$\gamma\in (0,1)$ is a discount factor that reflects how quickly the agent's memory of older classifiers diminishes. 
The agent then chooses $v_t\in\BR_{\discountedh}(x_t)$ with \emph{arbitrary tie-breaking}.
By adjusting the values of $\gamma$, this model captures a spectrum of agent behaviors.
When $\gamma \to 1$, $\discountedh$ approaches the average of all historical classifiers, reminiscent of \emph{fictitious play} in repeated games.
When $\gamma \rightarrow 0$, $\discountedh$ approaches $h_{t-1}$, meaning that the agent uses a one-step memory and best responds only to the most recent classifier.
\end{itemize}

\paragraph{Learner's Objective.}
The learner's objective is to minimize the total number of classification mistakes. For a sequence of agents $S=(x_t,y_t)_{t\in[T]}$, the learner's mistake bound is
\begin{align*}
    \mistake_{\phi}(S)=\sum_{t=1}^T \ind{h_t(v_t)\neq y_t}, 
\end{align*}
where $\phi\in\{\fullinfostandard, \fullinfoadversarial, \noinfo\}$ 
denotes one of the three agent behavioral models described above, which determines how $v_t$ is chosen.
We also use $\mistake_{\phi}(\cA,S)$ to denote the mistake bound of a specific algorithm $\cA$. 
Following prior works~\citep{ahmadi2021strategic,ahmadi2023fundamental,cohen2024learnability}, we focus on the
\emph{strategic realizable} setting, where there exists an optimal hypothesis $\hstar\in\cH$  that perfectly classifies the agent sequence $S$ when each agent $(x_t,y_t)$ best responds to $\hstar$. Formally, this means
\[
\forall (x_t,y_t)\in S, \quad\hstar(\BR_{\hstar}(x_t))=y_t.
\]
We focus on this realizable setting because, across all three agent behavior models that we consider, there exists a learner strategy that achieves zero mistakes by using the optimal-in-hindsight classifier $\hstar$ across all rounds. 
Consequently, the mistake bound $\mistake(S)$ aligns with the notion of \emph{Stackelberg regret}, which measures the learner's performance relative to the best fixed strategy in hindsight in this principal-agent framework. 
In either of the three settings $\phi\in\{\fullinfostandard, \fullinfoadversarial, \noinfo\}$, we use \[
\mistake_{\phi}(\cH)=\sup_{S \text{ strategic realizable under }\cH}\mistake_{\phi}(S)
\] 
to denote a learner's worst-case mistake bound across all strategic realizable sequences. Similar to $\mistake_{\phi}(\cA,S)$, we sometimes use $\mistake_{\phi}(\cA,\cH)$ to emphasize the algorithm $\cA$. Following a common assumption in the literature~\citep{cohen2024learnability,strategic-littlestone24,ahmadi2023fundamental}, we focus on deterministic algorithms.

In the standard setting, \citet{ahmadi2023fundamental, cohen2024learnability} show that 
\(\mistake_{\fullinfostandard}(\cH) = O(k_{\text{out}} \cdot \ld(\cH))\), where \(\ld(\cH)\) is the Littlestone dimension of \(\cH\)~\citep{littlestone1988learning}. They also establish that \(\Omega(k_{\text{out}} \cdot \ld(\cH))\) is a tight lower bound for all deterministic algorithms.
In this paper, we characterize the optimal mistake bound in the other two settings of $\fullinfoadversarial$ and $\noinfo$.

\subsection{Summary of Results}
\begin{itemize}
    \item We show that in the $\fullinfoadversarial$ setting, the learner suffers an extra $k_\text{in}$ multiplicative factor in the mistake bound than the $\fullinfostandard$ setting. More specifically, we show in \Cref{thm:adv-lower} that there exists family of instances where $\Omega(k_{\text{in}} k_{\text{out}}\cdot \ld(\cH))$ is a lower bound for $\mistake_{\fullinfoadversarial}(\cH)$ for determinisitc algorithms. 
    We also propose an algorithm that reduces our learning problem to classical online learning problem and prove that $\mistake_{\fullinfoadversarial}(\cH)=\tilde O(\kin\kout\cdot  \ld(\cH))$, see \Cref{claim:approx-br-upper-bound}.
   
    \item When agents lack information of the current classifier, and instead best respond to $\discountedh$ with a discount factor $\gamma\in(0,1)$, we propose a reduction algorithm that transforms the \(\noinfo\) setting into the \(\fullinfoadversarial\) setting, achieving
    \begin{align*}
        &\mistake_{\noinfo}(\cH)=O\!\left((1-\gamma)^{-1}\right)\cdot\mistake_{\fullinfoadversarial}(\cH)=\tilde{O}\left((1-\gamma)^{-1}\cdot\kin\kout\cdot\ld(\cH) \right).
    \end{align*}
    This implies a $\tilde{O}\left((1-\gamma)^{-1} \cdot \kin\right)$ multiplicative gap compared to the $\fullinfostandard$ deterministic lower bound of $\Omega(\kout \cdot \ld(\cH))$~\citep{ahmadi2023fundamental,cohen2024learnability}.
    We also show in \Cref{thm:discounted-lower} that for deterministic algorithms, both the $ (1-\gamma)^{-1}$ and $\kin$ factors are necessary in different regimes of the $\noinfo$ setting.
\end{itemize}
\section{Revealed Classifiers with Arbitrary Tie-breaking
}
\label{sec:adv}

In this section, we study $\fullinfoadversarial$ setting, which serves as a building block for our main setting where the classifiers are not revealed.
In the $\fullinfoadversarial$ setting, the decision-maker reveals the classifier $h_t$ at each round, then the agent $(x_t,y_t)$ manipulates their feature vector from $x_t$ to  $v_t$ that is chosen \emph{arbitrarily} from the best response set, i.e.,
$v_t\in\BR_{h_t}(x_t)$.
In particular, this model allows agents to move arbitrarily to any out-neighbor, even when both $v_t$ and $x_t$ are labeled negative by $h_t$. In contrast, prior works such as \citet{ahmadi2023fundamental,cohen2024learnability,strategic-littlestone24} assumes that agents remain at $x_t$ when $h_t$ labels all the out-neighbors $\Nout[x_t]$ as negative.
This assumption is critical to their results as it allows the decision-maker to exactly infer an agent's true feature vector when it ends up labeled as negative by the classifier. Specifically, when $h_t(v_t)=0$, the decision-maker can deduce that $x_t=v_t$ and penalize all the hypotheses in $\cH$ that misclassify this agent. 

\paragraph{Understanding arbitrary tie-breaking as an intermediate step toward the $\noinfo$ setting.}
Addressing the challenges arising from arbitrary tie-breaking is an important intermediate step towards our main $\noinfo$ setting. In the $\noinfo$ setting, 
agents lack access to the current classifier and instead use the weighted sum of historically implemented classifiers as an estimate. This estimation inevitably introduces small errors that cause the agents to perceive certain out-neighbors as more favorable than others, even when all are labeled as negative under $h_t$. As a result, when the agent's estimate $\tilde{h}_t$ is close to $h_t$, their best response to $\tilde{h}_t$ can still be viewed as a best response to $h_t$, but under arbitrary tie-breaking rules (which we will formalize in \Cref{sec:weighted}).
Therefore, we first remove the standard tie-breaking assumption, and study the $\fullinfoadversarial$ setting before analyzing the $\noinfo$ setting.

We will present our upper bound in the $\fullinfoadversarial$ setting in \Cref{sec:adv-upper} and lower bound in \Cref{sec:adv-lower}.

\subsection{Upper Bound}
\label{sec:adv-upper}
In this section, we provide an upper bound on $\mistake_{\fullinfoadversarial}$ by establishing a reduction to classical (non-strategic) online learning. While the main idea is similar to that of \citet{cohen2024learnability}, we need to address the challenge of no longer being able to observe the agent's true feature vector $x_t$ under false negative mistake (i.e., when the agent's true label is $y_t=1$, but the classifier $h_t$ incorrectly labels the agent as $0$). We address this challenge via a more careful and conservative procedure of updating the expert class. We present the algorithm in \Cref{alg:reduction2online}.

At a high level, \Cref{alg:reduction2online} maintains a set $E$ of weighted experts and predicts $x$ as $1$ only when the total weight of experts predicting it as $1$ is sufficiently large. Each expert runs a classical non-strategic online learning algorithm, such as the SOA algorithm~\citep{littlestone1988learning}, with different inputs. 
Each time a mistake is made, the algorithm penalizes the experts who made incorrect predictions. The challenge lies in identifying whether an expert has made a mistake, as we do not know the original feature vector $x_t$ and thus cannot determine the post-manipulation feature vector had we followed this expert. 
Additionally, we aim for at least one expert to perform nearly as well as the optimal classifier $\hstar$. To achieve this, we seek to feed the expert the post-manipulation feature vector that would have resulted had we implemented the optimal classifier $\hstar$.

The algorithm uses the observed $v_t \in \BR_{h_t}(x_t)$ to enumerate all possible manipulated feature vectors that would have allowed the agent to be correctly classified had they best responded to the optimal classifier $\hstar$. Then, at least one expert in $E$ is fed with the same trajectory induced by best-responding to $\hstar$.

\begin{algorithm}[t]\caption{Reduction from $\fullinfoadversarial$ to classical (non-strategic) online learning }\label{alg:reduction2online}
    \begin{algorithmic}[1]
        \STATE \textbf{Input: } A standard online learning algorithm $\cA$, manipulation graph $G$, maximum out-degree $k_{\text{out}}$, and maximum in-degree $k_{\text{in}}$
        \STATE \textbf{Initialization: } Expert set $E = \{\cA\}$. Initial weight $w_{\cA}=1$.
        \FOR{$t=1,2,\ldots$}
        \STATE \underline{Prediction}: $\forall x\in\cX$, set $h_t(x) =1$ if and only if $$\sum_{A \in E: A(x) = 1}w_{A} \geq \frac{\sum_{A\in E} w_{A}}{2(k_{\text{out}}+1)(k_{\text{in}}+1)}.$$\label{alg-line:k-fraction-weight}
        \STATE \underline{Update}: \textit{//when $h_t$ makes a mistake at the observed node $v_t$}
        \IF{$h_t(v_t)=1$ and $y_t=0$ (false positive mistake)}
        \STATE for all $A\in E$ satisfying $A(v_t) = 1$, update $A$ by feeding it with $(v_t,0)$, update weight $w_{A} \gets \frac{1}{2} w_{A}$
        \ELSIF{$h_t(v_t)=0$ and $y_t=1$ (false negative mistake)}
        \STATE $\tilde\cX_t\gets\left\{x\in \Nin[v_t]\mid h_t(\Nout[x])=0\right\} $ \textit{//Possible candidates for $x_t$ based on observed $v_t$}
        \label{line:possible-initial-features}
        \STATE $\tilde\cV_t\gets\cup_{x\in\tilde\cX_t} N_\text{out}[x]$ \textit{//Possible manipulations of $x_t$ under the optimal classifier $\hstar$}
        \label{line:possible-manipulations}
        \FORALL{$A\in E$ s.t. $\forall x'\in \tilde\cV_t, A(x') =0$,}
            \STATE feed $(x',1)$ to $A$ for each $x'\in \tilde\cV_t$ to obtain a new expert $A(x',1)$
            \STATE Replace $A$ in $E$ with $\{A(x',1)\mid
            x'\in\tilde\cV_t 
            \}$, assign weights $w_{A(x',1)} = {w_{A}}/{(2
            |\tilde\cV_t|)
            }$
        \ENDFOR
        \ENDIF
        \ENDFOR
    \end{algorithmic}
\end{algorithm}

In particular, when \Cref{alg:reduction2online}
incurs a \emph{false negative} mistake, the agent might have manipulated from an in-neighbor $x_t\in\Nin[v_t]$, as permitted by the adversarial tie-breaking rule. To account for this, our algorithm first identifies all the in-neighbors of $v_t$ that cannot manipulate to receive a positive classification under $h_t$---i.e., those who has $h_t(\Nout[x])=0$---as possible candidates for $x_t$. This forms the set $\tilde\cX_t$ as described in Line~\ref{line:possible-initial-features} of \Cref{alg:reduction2online}. Our algorithm then considers all potential manipulations of $x_t$ under $\hstar$ and forms the set $\tilde\cV_t$ in Line~\ref{line:possible-manipulations}. These nodes are then used to update the experts as a way of enumerating all possible trajectories consistent with $\hstar$.

We provide the guarantee for \Cref{alg:reduction2online} in \Cref{thm:adv-online-reduction} and defer its proof to \Cref{app:proof-adv-online-reduction}.

\begin{restatable}[Upper Bound on $\mistake_{\fullinfoadversarial}$]{theorem}{thmupperboundadv}
    \label{thm:adv-online-reduction}
    Let $M^{\text{ns}}(\cA,\cH)$ be the mistake bound of $\cA$ in the standard (non-strategic) online learning setting when the inputs are (non-strategic) realizable by $\cH$. 
    In online strategic classification where the agent sequence is strategic realizable under $\cH$, and each agent best responds to $h_t$ with adversarial tie-breaking, \Cref{alg:reduction2online} achieves
    \[
    \mistake_{\fullinfoadversarial}(\cH)=O(k_{\text{in}} \cdot k_{\text{out}}\cdot \ln(k_{\text{in}} k_{\text{out}}))\cdot  M^{\text{ns}}(\cA,\cH).
    \]
    
\end{restatable}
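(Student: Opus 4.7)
The plan is a weighted-majority-style analysis anchored on a carefully constructed ``good expert'' that traces, round by round, what running $\cA$ on a realizable subsequence of examples would produce. I will inductively define a distinguished expert $A_t^\star \in E$: set $A_1^\star = \cA$; on rounds with no mistake nothing changes; on a false-positive round at which $A_t^\star(v_t)=1$, let $A_{t+1}^\star$ be $A_t^\star$ after the $(v_t,0)$ update (its weight halves); on a false-negative round at which $A_t^\star$ predicts $0$ on every vector in $\tilde\cV_t$, pick the child branch fed with $(v_t^\star,1)$ for some $v_t^\star \in \tilde\cV_t$ satisfying $\hstar(v_t^\star)=1$, and set $A_{t+1}^\star$ to that branch (otherwise no change). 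The key observation that enables this choice in the false-negative case is that the strategic agent best-responded under $h_t$, so $h_t(v_t)=0$ forces $h_t(\Nout[x_t])=0$; hence $x_t\in\tilde\cX_t$, and since $y_t=1$, strategic realizability gives some $v\in\BR_{\hstar}(x_t)\subseteq\Nout[x_t]$ with $\hstar(v)=1$, so such a $v_t^\star$ indeed lies in $\tilde\cV_t$.

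With this construction, the labels on every example fed to $A_t^\star$ are consistent with $\hstar$: false-positive inputs $(v_t,0)$ match $\hstar$ because $y_t=0$ together with realizability forces $\hstar(\Nout[x_t])=0\ni v_t$, and false-negative inputs $(v_t^\star,1)$ are labeled by $\hstar$ by construction. Thus $A_t^\star$ is effectively $\cA$ run on a sequence realizable by $\cH$, and it is fed an example only on rounds where that example would be a mistake for its then-current prediction; consequently the number of updates to $A_t^\star$ is at most $M^{\text{ns}}(\cA,\cH)$. Each such update multiplies the weight of $A_t^\star$ by at least $1/(2|\tilde\cV_t|)\geq 1/(2\kin\kout)$, so $w_{A_t^\star}\geq (2\kin\kout)^{-M^{\text{ns}}(\cA,\cH)}$ throughout the run.

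In parallel, I will upper-bound the total weight $W_t=\sum_{A\in E}w_A$ using the prediction threshold in line~\ref{alg-line:k-fraction-weight}. A false-positive round has $\sum_{A:A(v_t)=1}w_A \geq W_t/[2(\kout+1)(\kin+1)]$ by the threshold rule, and halving that group shrinks $W_t$ by a multiplicative factor of at most $1-1/[4(\kout+1)(\kin+1)]$. A false-negative round has $h_t(x')=0$ for every $x'\in\tilde\cV_t$ (by definition of $\tilde\cX_t$), so the threshold rule gives $\sum_{A:A(x')=1}w_A < W_t/[2(\kout+1)(\kin+1)]$ for each such $x'$; a union bound over the $\leq \kin\kout$ vectors of $\tilde\cV_t$ shows that experts predicting $0$ on all of $\tilde\cV_t$ still carry weight at least $W_t/2$, and the split step halves that subgroup's total weight, yielding a comparable multiplicative decrease in $W_t$. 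After $M$ mistakes this gives $W\leq (1-1/[4(\kout+1)(\kin+1)])^M$; combining with the lower bound on $w_{A_t^\star}$ and the inequality $\ln(1/(1-x))\geq x$ yields $M=O(\kin\kout\ln(\kin\kout))\cdot M^{\text{ns}}(\cA,\cH)$, which is the claimed bound.

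The main obstacle is the bookkeeping in the false-negative case, where the true initial feature $x_t$ is unobserved. One must simultaneously (i) verify that $x_t\in\tilde\cX_t$ so that a branch consistent with $\hstar$ always exists in the expert split, and (ii) control $|\tilde\cV_t|$ by $\kin\kout$ so that the union-bound step still leaves a constant fraction of ``bad'' weight to be halved. The threshold $1/[2(\kout+1)(\kin+1)]$ in line~\ref{alg-line:k-fraction-weight} is tuned precisely so that both the union bound in (ii) and the false-positive weight decrease go through simultaneously, while the bound $|\tilde\cV_t|\leq \kin\kout$ is what produces the extra $\kin\kout\ln(\kin\kout)$ multiplicative overhead relative to the non-strategic mistake bound.
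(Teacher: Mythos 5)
Your proposal is correct and follows essentially the same route as the paper's proof: a weighted-majority potential argument in which every mistake shrinks the total weight by a $1-\Theta(1/(\kin\kout))$ factor, paired with a distinguished expert that tracks an $\hstar$-realizable trajectory (using $x_t\in\tilde\cX_t$ and the existence of a positively labeled node of $\tilde\cV_t$ under $\hstar$) and therefore retains weight at least $(2\kin\kout)^{-M^{\text{ns}}(\cA,\cH)}$. Your inductive bookkeeping of the good expert is a slightly more explicit rendering of the paper's ``there exists an expert $A^\star$'' step, but the decomposition, the union bound over $|\tilde\cV_t|\le\kin\kout$, and the final comparison of the two weight bounds are identical.
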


It is well-known that the Standard Optimal Algorithm (SOA)~\citep{littlestone1988learning} achieves mistake bound $M^{\text{ns}}_{\text{SOA}}(\cH)=\ld(\cH)$ in the standard realizable online learning (where $\ld(\cH)$ is the Littlestone dimension of $\cH$).
By plugging this non-strategic mistake bound into \Cref{thm:adv-online-reduction}, we obtain an upper bound on \( \mistake_{\fullinfoadversarial} \), summarized in the following corollary.

\begin{corollary}[Upper bound on $\mistake_{\fullinfoadversarial}$]
    \label{claim:approx-br-upper-bound}
    For any hypothesis class $\cH$ with Littlestone dimension $d$, and any manipulation graph $G$ with out-degree $k_{\text{out}}$ and in-degree $k_{\text{in}}$, in the strategic realizable setting, when the agents best-respond to $h_t$ with adversarial tie-breaking, our \Cref{alg:reduction2online} with $\cA = \text{SOA}$ can achieve mistake bound
    \[
    \mistake_{\fullinfoadversarial}(\cH)=O(k_{\text{in}}\cdot k_{\text{out}}\cdot \ln(k_{\text{in}} k_{\text{out}})\cdot d).
    \]
\end{corollary}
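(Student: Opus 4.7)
The corollary is immediate from \Cref{thm:adv-online-reduction} once one plugs in $\cA = \mathrm{SOA}$, whose non-strategic realizable mistake bound on $\cH$ is $M^{\mathrm{ns}}_{\mathrm{SOA}}(\cH) = \ld(\cH) = d$ \citep{littlestone1988learning}. Consequently the real task is to prove \Cref{thm:adv-online-reduction}, and that is what I sketch below.

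The plan is a weighted-majority analysis of \Cref{alg:reduction2online}. Let $W_t := \sum_{A\in E} w_A$ be the total expert weight at the start of round $t$. The proof will have two pieces: a \emph{good-expert lower bound} of the form $w_{A^\star} \ge (2k_{\mathrm{in}} k_{\mathrm{out}})^{-M^{\mathrm{ns}}(\cA,\cH)}$ for a carefully chosen surviving expert $A^\star$, and a \emph{total-weight upper bound} showing that each algorithm mistake shrinks $W_t$ by a multiplicative factor of $1 - \Theta(1/(k_{\mathrm{in}} k_{\mathrm{out}}))$. Combining the two via logarithms then yields the advertised $M = O(k_{\mathrm{in}}k_{\mathrm{out}}\ln(k_{\mathrm{in}}k_{\mathrm{out}})\cdot M^{\mathrm{ns}}(\cA,\cH))$.

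To construct the good expert, fix an $\hstar$ realizing the agent stream and define $A^\star$ recursively: at every round where an update or split occurs, follow the branch whose freshly fed pair $(z,y)$ satisfies $\hstar(z)=y$. Two small realizability checks justify that such a branch always exists. In a false-positive round, strategic realizability forces $\hstar(x')=0$ for every $x' \in \Nout[x_t]$ (otherwise an agent best-responding to $\hstar$ would not be labeled $y_t=0$), so feeding $(v_t,0)$ is $\hstar$-consistent. In a false-negative round, $v_t\in \BR_{h_t}(x_t)$ together with $h_t(v_t)=0$ force $h_t(\Nout[x_t])=0$, so $x_t\in \tilde\cX_t$; hence $\Nout[x_t]\subseteq \tilde\cV_t$, and any $v_t^\star\in\BR_{\hstar}(x_t)$ lies in $\tilde\cV_t$, making the branch feeding $(v_t^\star,1)$ $\hstar$-consistent. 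The sequence of pairs fed to $A^\star$ is therefore realizable by $\hstar\in\cH$, so $A^\star$ incurs at most $M^{\mathrm{ns}}(\cA,\cH)$ effective mistakes, each of which multiplies its weight by at least $1/(2|\tilde\cV_t|) \ge 1/(2k_{\mathrm{in}}k_{\mathrm{out}})$; the lower bound on $w_{A^\star}$ follows.

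For the total-weight decrease, the false-positive case is routine: Line~\ref{alg-line:k-fraction-weight} guarantees that experts voting $1$ at $v_t$ hold weight at least $W_t/(2(k_{\mathrm{out}}+1)(k_{\mathrm{in}}+1))$, and halving them yields the required decrease. The false-negative case is the step I expect to require the most care, because one must lower-bound the weight of the \emph{splittable} experts---those predicting $0$ on \emph{every} $x' \in \tilde\cV_t$---rather than simply those predicting $0$ on $v_t$. The key observation is that by construction of $\tilde\cX_t$, the algorithm predicted $h_t(x')=0$ for every $x' \in \tilde\cV_t$, so the prediction rule already forces the weight voting $1$ at each such $x'$ to be below $W_t/(2(k_{\mathrm{out}}+1)(k_{\mathrm{in}}+1))$. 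A union bound over $|\tilde\cV_t|\le k_{\mathrm{in}}k_{\mathrm{out}}$ nodes then shows that splittable experts carry strictly more than $W_t/2$ of the total weight, so halving them gives a constant-factor decrease. Chaining the two bounds as $(2k_{\mathrm{in}}k_{\mathrm{out}})^{-M^{\mathrm{ns}}(\cA,\cH)} \le W_T \le \exp\bigl(-\Omega(M/(k_{\mathrm{in}}k_{\mathrm{out}}))\bigr)$ and solving for $M$ completes \Cref{thm:adv-online-reduction} and, via SOA, \Cref{claim:approx-br-upper-bound}.
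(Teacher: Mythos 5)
Your proposal is correct and follows essentially the same route as the paper: the corollary is indeed an immediate instantiation of \Cref{thm:adv-online-reduction} with $\cA=\mathrm{SOA}$, and your sketch of that theorem---tracking the total weight $W_t$, lower-bounding the weight of a surviving $\hstar$-consistent expert via the realizability checks on $\tilde\cX_t$ and $\tilde\cV_t$, and using the union bound over $|\tilde\cV_t|\le k_{\text{in}}k_{\text{out}}$ nodes to show splittable experts carry more than half the weight in false-negative rounds---matches the paper's argument step for step. No gaps.
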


\subsection{Lower Bound}
In this section, we provide a lower bound of $\mistake_{\fullinfoadversarial}$ that matches the upper bound up to logarithm factors. We defer the proof of \Cref{thm:adv-lower} to \Cref{app:proof-adv-lower-bound}.
\label{sec:adv-lower}
\begin{restatable}[Lower bound on $\mistake_{\fullinfoadversarial}$]{theorem}{thmlowerboundadv}
    \label{thm:adv-lower}
        For any $d, \kin,\kout\in\NN$ with $\kout\geq \kin$, there exists a hypothesis class $\cH$ with Littlestone dimension $d$, and manipulation graph $G$ with maximum in (resp. out)-degree $k_{\text{in}}$ (resp. $k_{\text{out}}$), such that in the strategic realizable setting, any deterministic learning algorithms must suffer 
        \[
        \mistake_{\fullinfoadversarial}(\cH)=\Omega(k_{\text{in}}\cdot k_{\text{out}}\cdot d).
        \]
\end{restatable}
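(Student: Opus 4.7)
The plan is to construct, for each $d, \kin, \kout \in \NN$ with $\kout \geq \kin$, a hypothesis class $\cH$ of Littlestone dimension $d$ together with a manipulation graph $G$ of max in-/out-degree $\kin/\kout$, such that any deterministic learning algorithm can be forced by some strategic-realizable agent sequence to make $\Omega(\kin \kout d)$ mistakes in the $\fullinfoadversarial$ setting. The construction extends the $\Omega(\kout \cdot d)$ lower bound of \citet{ahmadi2023fundamental,cohen2024learnability} for the $\fullinfostandard$ setting by an additional factor of $\kin$ that arises because arbitrary tie-breaking lets the adversary hide the pre-manipulation feature $x_t$ among the $\kin$ in-neighbors of the observed $v_t$.

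First I would design a single-level amplification gadget that forces $\Omega(\kin \kout)$ mistakes on its own. The gadget consists of $\kin$ source nodes $s_1, \ldots, s_{\kin}$ and $\kout$ target nodes $t_1, \ldots, t_{\kout}$, with every edge $s_i \to t_j$ and self-loops on all nodes (so source out-degree is $\kout+1$ and target in-degree is $\kin+1$). On top of this gadget I would place a family of hypotheses indexed by configurations $(i^*, j^*) \in [\kin] \times [\kout]$, chosen so that each pair of distinct configurations disagrees on exactly one gadget node, and every configuration admits both a true-positive agent and a decoy negative agent. Against any deterministic learner, the adversary either plays a true-positive agent $(s_{i^*}, 1)$ while all gadget labels are $0$ under $h_t$---using arbitrary tie-breaking to route the agent to an adversarially chosen $t_j$ and incur a maximally uninformative false negative---or plays a decoy $(s_i, 0)$ or $(t_j, 0)$ agent whenever the learner labels some node as $1$, incurring a false positive that eliminates only the few configurations pinned on that node.

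To amplify to $d$ levels, I would compose $d$ disjoint copies of the gadget along a depth-$d$ Littlestone mistake tree. Concretely, each internal node $\sigma$ of the tree is associated with its own gadget in $G$ (so the gadgets are vertex-disjoint), and each hypothesis in $\cH$ is labeled by a root-to-leaf path together with a per-level configuration. The disjointness of gadgets ensures that the Littlestone dimension is exactly $d$---the ``path bit'' at each level is the only shatterable choice---and the mutual independence of gadgets ensures that the $\Omega(\kin \kout)$ per-level attack iterates without interference. Starting from the root and descending adaptively, the adversary forces $\Omega(\kin \kout)$ mistakes at each level before committing to the branch at that level, yielding $\Omega(\kin \kout d)$ mistakes in total.

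The main obstacle is the gadget-level lower bound: a halving-style argument would give only $O(\log(\kin \kout))$ mistakes per gadget, so the proof hinges on showing that each mistake can be arranged to eliminate only $O(1)$ of the $\kin \kout$ configurations. Intuitively this is possible because in the $\fullinfoadversarial$ setting (a) a false-negative observation $(v_t, 1)$ leaves $x_t$ ambiguous among the $\kin$ in-neighbors of $v_t$, allowing essentially all configurations to remain consistent; and (b) a false-positive decoy at a single node distinguishes only the configurations pinned on that node. Making this rigorous requires carefully designing the gadget hypotheses so that every pair is distinguished by only a single node, and maintaining strategic realizability throughout the adaptive trajectory, which I handle by deferring the commitment of $h^*$ until after all mistakes in that gadget have been forced, via the standard adversary-versus-deterministic-learner construction.
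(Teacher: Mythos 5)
Your high-level strategy is the right one---force the learner into uninformative false negatives by routing agents, via arbitrary tie-breaking, to a common sink whose $\kin$ in-neighbors hide $x_t$, and make every false positive eliminate only $O(1)$ hypotheses---and you correctly identify the crux (each mistake must kill only $O(1)$ of the $\kin\kout$ configurations). But your concrete gadget cannot deliver this. A complete bipartite graph on $\kin+\kout$ nodes cannot host $\kin\kout$ pairwise-distinct hypotheses each of which is eliminated one at a time: by pigeonhole, some node must distinguish $\Omega(\kin\kout/(\kin+\kout))$ configurations, so a single false positive (or the realizability constraint $\hstar(\Nout[x_t])=0$ revealed by a label-$0$ decoy) wipes out far more than $O(1)$ hypotheses. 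Worse, your stated design requirement that ``each pair of distinct configurations disagrees on exactly one gadget node'' is combinatorially unsatisfiable for more than two hypotheses: viewing hypotheses as binary vectors on the gadget nodes, pairwise Hamming distance exactly $1$ among three vectors would form a triangle in the hypercube, which is bipartite. Also note that if the true positive is the source $s_{i^\star}$ with $h_{(i^\star,j^\star)}$ positive at $t_{j^\star}$, then $h_{(i,j^\star)}$ and $h_{(i',j^\star)}$ coincide as functions, so you really only have $\kout$ distinct hypotheses.

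The paper's construction fixes exactly this by using \emph{three} layers and $\Theta(\kin\kout)$ nodes: a root $x_0$ with in-degree $\kin$ (the uninformative sink), $\kin$ middle nodes $x_1,\dots,x_{k_1}$ bidirectionally linked to $x_0$, and $\kin\cdot\kout$ leaves $x_{i,j}$ (each with out-degree $1$, i.e., only a self-loop), with $\cH$ the singletons over the leaves. A false negative on agent $(x_{i^\star},1)$ is routed to $x_0$ and reveals nothing about $(i^\star,j^\star)$; a false positive forced at $x_0$ or a middle node reveals nothing; and a false positive decoy at a leaf $x_{i,j}$ eliminates only the single hypothesis $h_{i,j}$ because that leaf's out-neighborhood is just itself. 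This yields $k_1k_2$ mistakes with Littlestone dimension $1$, and the $d$-fold amplification is done by $d$ disjoint independent copies (your Littlestone-tree composition with a gadget per internal node is unnecessary and, as sketched, risks inflating the Littlestone dimension beyond $d$). So the gap is not a technicality: the gadget itself must be redesigned with a separate leaf per hypothesis before the rest of your argument can go through.
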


Our construction is built upon the lower bounds in~\citet{ahmadi2023fundamental, cohen2024learnability}. We defer the formal construction to \Cref{app:proof-adv-lower-bound}. Here, we provide some intuition on why removing the standard tie-breaking assumption makes the problem harder.

Consider a manipulation graph that has a subgraph structure shown in \Cref{fig:subgraph}. Suppose we have a hypothesis class where each hypothesis $h_i$ labels agents with the original feature vector $x_i$ ($\forall i = 1,2,3$) as positive and all others as negative. Let the optimal hypothesis be $h_{i^\star}$.

Consider the scenario where the decision-maker implements the all-negative function and the adversary selects agent $x_{i^\star}$. In the $\fullinfoadversarial$ setting, if the tie-breaking always favors $x_0$, the decision-maker makes a mistake but observes the post-manipulation feature at $x_0$, thereby gaining no information about $i^\star$. This is in contrast to the $\fullinfostandard$ setting, where agent $x_{i^\star}$ cannot obtain a positive prediction through manipulation and thus remains at $x_{i^\star}$, enabling the learner to successfully identify the optimal hypothesis.

\begin{figure}[htbp]
    \centering
    \begin{tikzpicture}[>=stealth, node distance=1.5cm
    ]

    \node[draw, circle, fill=black, inner sep=1pt, label=above:$x_0$] (x0) at (0,0) {};
    \node[draw, circle, fill=black, inner sep=1pt, label=left:$x_1$] (x1) at (-2,-1) {};
    \node[draw, circle, fill=black, inner sep=1pt, label=left:$x_2$] (x2) at (0,-1) {};
    \node[draw, circle, fill=black, inner sep=1pt, label=right:$x_3$] (x3) at (2,-1) {};

    \draw[<->] (x0) -- (x1);
    \draw[<->] (x0) -- (x2);
    \draw[<->] (x0) -- (x3);

\end{tikzpicture}
    \caption{A subgraph of the lower bound construction.}
    \label{fig:subgraph}
\end{figure}
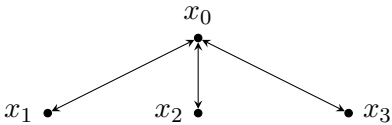

\section{Best Response to the Weighted-Sum of History}\label{sec:weighted}
In this section, we study the $\noinfo$ setting. In round $t$, the agents cannot observe the current implemented classifier $h_t$. They will only best respond to the weighted average of historical classifiers \[\discountedh =
\frac{1-\gamma}{1-\gamma^{t-1}}\cdot\sum_{\tau=0}^{t-2} \gamma^{\tau} h_{t-1-\tau}\] for some $\gamma \in (0,1)$ as defined in \Cref{sec:model}, i.e., 
$v_t \in \BR_{\discountedh}(x_t)$
with adversarially tie-breaking.
We show that the learner will suffer $O((1-\gamma)^{-1} \cdot \kin)$ times more mistakes.

\subsection{Upper Bound}
\begin{theorem}[Upper bound on $\mistake_{\noinfo}$]
    \label{thm:discounted-upper}
    For any hypothesis class $\cH$ with Littlestone dimension $d$ and any manipulation graph $G$ with out-degree $k_{\text{out}}$ and in-degree $k_{\text{in}}$, in the strategic realizable setting, when the agents best-respond to the $\gamma$-weighted sum of the history $\discountedh$, there exists a learning algorithm that achieves a mistake bound of
    \[
    O\left(\min \Big\{d (1-\gamma)^{-1}\cdot k_{\text{in}} k_{\text{out}}\ln(k_{\text{in}} k_{\text{out}}) , |\cH|\right\}\Big).
    \]
\end{theorem}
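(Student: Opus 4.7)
The plan is to reduce the $\noinfo$ setting to the $\fullinfoadversarial$ setting by forcing the learner to commit to each classifier for a ``stabilization window'' of length $k := \lceil \log 2/(1-\gamma)\rceil+1 = O((1-\gamma)^{-1})$, chosen so that $\gamma^k<1/2$. Specifically, I would run \Cref{alg:reduction2online} as a black-box subroutine $\cA$ producing classifiers $h$, and maintain a counter $c$ recording how many consecutive past rounds $h$ has been played. The learner plays $h$ every round; only when $c\geq k$ \emph{and} a mistake $h(v_t)\neq y_t$ occurs do we feed $(v_t,y_t)$ to $\cA$, adopt its new proposal as $h$, and reset $c\gets 0$.

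The key lemma I would prove says that the $k$-round commitment makes $\discountedh$ ``effectively equal'' to $h$: if $h=h_{t-k}=\cdots=h_{t-1}$, then
\[
\alpha \;:=\; \frac{\sum_{\tau=0}^{k-1}\gamma^\tau}{\sum_{\tau=0}^{t-2}\gamma^\tau}\;=\;\frac{1-\gamma^k}{1-\gamma^{t-1}}\;>\;\tfrac12,
\]
and hence $\discountedh(v) = \alpha\,h(v) + (1-\alpha)g(v)$ for some $g\in[0,1]^\cX$. Consequently any vertex with $h(v)=0$ scores strictly below $1/2$ while any vertex with $h(v)=1$ scores strictly above $1/2$, so every $v_t\in\BR_{\discountedh}(x_t)$ satisfies $h(v_t)=\max_{v\in\Nout[x_t]}h(v)$. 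This is exactly the definition of $v_t$ being a valid best response to $h$ under arbitrary tie-breaking, i.e., a legitimate $\fullinfoadversarial$ response.

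Given the lemma, mistakes partition into \emph{post-warmup} mistakes (with $c\geq k$) and \emph{warmup} mistakes (with $c<k$). The post-warmup mistakes form a valid $\fullinfoadversarial$-trace fed into $\cA$ against a strategically realizable sequence under $\hstar$, so by \Cref{claim:approx-br-upper-bound} their count is at most $\mistake_{\fullinfoadversarial}(\cA,\cH)=O(\kin\kout\ln(\kin\kout)\cdot d)$. Each post-warmup mistake triggers a classifier switch followed by at most $k$ warmup rounds (and the very first classifier contributes an initial $k$-round warmup). Summing,
\[
\mistake_{\noinfo}(\cH)\;\leq\;(k+1)\bigl(\mistake_{\fullinfoadversarial}(\cA,\cH)+1\bigr)\;=\;O\!\bigl((1-\gamma)^{-1}\kin\kout\ln(\kin\kout)\,d\bigr).
\]
The $|\cH|$ alternative follows from an analogous slow-down applied to a direct version-space algorithm: maintain $V\subseteq\cH$, commit each choice $h\in V$ for $k$ warmup rounds, then on every post-warmup mistake use the same false-positive/false-negative case split as \Cref{alg:reduction2online}---where the key lemma certifies that a post-warmup false negative matches exactly the condition $h(\Nout[x])=0$ for all $x\in\tilde\cX_t$ and hence that $h\neq\hstar$---to strictly shrink $V$, capping the total accordingly.

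The main obstacle is the strict-dominance argument at the $1/2$ threshold in the key lemma, which requires $\gamma^k<1/2$ strictly (not weakly) so that arbitrary tie-breaking is the only residual slack, plus a check that boundary cases with $t\leq k$ (where $h$ has been played the whole game and its weight is trivially $1$) behave correctly. A secondary obstacle is the bookkeeping: ensuring each post-warmup mistake is charged against exactly one update to $\cA$ while all warmup mistakes are absorbed into the $(k+1)$ multiplicative overhead, and verifying that $\cA$'s internal analysis is unaffected by being fed a subsequence (rather than every mistake) of the observed trajectory.
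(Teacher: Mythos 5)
Your first component---the reduction achieving $O\!\left(d(1-\gamma)^{-1}\kin\kout\ln(\kin\kout)\right)$---is essentially the paper's argument (\Cref{alg:reduction2approxbr} and \Cref{lemma:mistake-alg-red2approx}): commit to each classifier output by the $\fullinfoadversarial$ subroutine for a window of length $\Theta((1-\gamma)^{-1})$ so that $\discountedh$ places weight more than a constant (the paper uses $2/3$, you use $1/2$) on the current $h_t$, conclude that any best response to $\discountedh$ is then a legitimate arbitrary-tie-breaking best response to $h_t$, and absorb the in-window mistakes into a multiplicative $O((1-\gamma)^{-1})$ overhead. The only cosmetic difference is that the paper's counter tracks \emph{mistakes} since the last update rather than \emph{rounds}, but both yield the same charging argument, and your key lemma and bookkeeping are sound.

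The genuine gap is in the $O(|\cH|)$ branch. Your plan---commit to a hypothesis (or version-space-based predictor) for $k=\Theta((1-\gamma)^{-1})$ warmup rounds before each elimination---only charges warmup mistakes to the $(k+1)$ multiplicative overhead, so each of the up-to-$|\cH|$ eliminations can be preceded by up to $k$ unanswered mistakes. This yields $O\!\left((1-\gamma)^{-1}|\cH|\right)$, not $O(|\cH|)$, and when $|\cH|$ is small this is strictly weaker than the stated bound (and the extra factor matters: \Cref{thm:discounted-lower} shows the true answer is $\Omega(\min\{d(1-\gamma)^{-1},|\cH|\})$, so the $|\cH|$ term must appear without a $(1-\gamma)^{-1}$ inflation). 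The paper's \Cref{alg:union} avoids any commitment window entirely: it predicts $x$ positive iff \emph{some} surviving hypothesis does, eliminates on false positives only, and observes that (i) $\hstar$ is never eliminated since $\hstar(\Nout[x_t])=0$ whenever $y_t=0$, and (ii) a false negative at round $t$ forces $\discountedh(v)=1$ for $v\in\BR_{\hstar}(x_t)$ (because all past classifiers labeled it positive), hence $h_\tau(v_t)=1$ for all $\tau<t$, so the version space must have shrunk at round $t-1$---i.e., every false negative is immediately preceded by a false positive. This pairs each false negative with an elimination and gives $2|\cH|$ total with no dependence on $\gamma$. You would need this (or an equivalent) argument to close the gap.
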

We design two separate algorithms to achieve the bounds of $O(|\cH|)$ (in \Cref{lemma:discounted-upper-2}) and $O\left((1-\gamma)^{-1} \cdot k_{\text{in}} k_{\text{out}}\ln(k_{\text{in}} k_{\text{out}}) d\right)$ (by combining \Cref{lemma:mistake-alg-red2approx} and \Cref{claim:approx-br-upper-bound}) respectively.

While achieving $O(|\cH|)$ is usually trivial in most online learning problems (including our $\fullinfoadversarial$ setting), it becomes subtle in the $\noinfo$ setting. Typically, this linear guarantee can be obtained by trying each hypothesis in $\cH$ and discarding it once it makes a mistake. However, in the $\noinfo$ setting, even if we have identified the optimal classifier $\hstar$ and used it as the current classifier $h_t$, it may still make a mistake since the agent is not necessarily best responding to $h_t$. 

To achieve this linear guarantee, we refine the approach of trying all hypotheses as shown in \Cref{alg:union}. Specifically, we predict $x$ as positive if any existing hypothesis labels $x$ as positive. If we make a false positive mistake, we can eliminate one hypothesis. A false negative mistake can occur only if a false positive mistake happened in the previous round. Thus, we obtain the following guarantee. The proof of \Cref{lemma:discounted-upper-2} is deferred to \Cref{app:discounted-upper-2}.

\begin{algorithm}[htbp]\caption{Conservative algorithm for $\noinfo$} \label{alg:union}
    \begin{algorithmic}[1]
        \STATE \textbf{Initialization: } expert set $E = \cH$
        \FOR{$t=1,2,\ldots$}
        \STATE \underline{Prediction}: at each point $x$, $h_t(x) =1$ if and only if there exists $h \in E$ such that $h(x)=1$. 
        \STATE \underline{Update}: \textit{//when observe a mistake at $v_t$}
        \IF{$y_t = 0$}
        \STATE $E \leftarrow E\backslash \{h:h(v_t)=1\}$
        \ENDIF
        \ENDFOR
    \end{algorithmic}
\end{algorithm}
\begin{restatable}{lemma}{lemmaupperboundH}
\label{lemma:discounted-upper-2}
For any hypothesis class $\cH$, in the strategic realizable setting, when the agents best-respond to the $\gamma$-weighted sum of the history $\discountedh$, \Cref{alg:union} achieves a mistake bound of $O(|\cH|)$.
\end{restatable}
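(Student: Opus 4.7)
My plan is to separately bound false positive (FP) and false negative (FN) mistakes, each by $O(|\cH|)$, by exploiting that the expert set $E$ shrinks only after FP mistakes and that $\hstar$ stays in $E$ throughout. The invariant $\hstar \in E$ follows from strategic realizability: whenever $y_t = 0$, we have $\hstar(\BR_{\hstar}(x_t)) = 0$, which forces $\hstar(v) = 0$ for every $v \in \Nout[x_t]$ and in particular $\hstar(v_t) = 0$, so $\hstar$ is never among the hypotheses removed when $y_t = 0$. Consequently, every FP mistake strictly decreases $|E|$ while keeping $\hstar$ intact, giving at most $|\cH| - 1$ FP mistakes.

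The main step is to show that each FN at round $t$ forces a FP at round $t-1$. Suppose an FN occurs at round $t$, i.e., $h_t(v_t) = 0$ and $y_t = 1$. By strategic realizability there exists some $v^{\star} \in \Nout[x_t]$ with $\hstar(v^{\star}) = 1$; and since $\hstar \in E_\tau$ for every $\tau < t$, the predicted value $h_\tau(v^{\star})$ equals $1$ for each such $\tau$, so $\discountedh(v^{\star}) = 1$. Because $v_t \in \BR_{\discountedh}(x_t)$, we have $\discountedh(v_t) \geq \discountedh(v^{\star}) = 1$; as $\discountedh(v_t)$ is a strictly-positive convex combination of the $\{0,1\}$-valued quantities $h_\tau(v_t)$, this in turn forces $h_\tau(v_t) = 1$ for every $\tau \in \{1, \ldots, t-1\}$. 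In particular $h_{t-1}(v_t) = 1$ while $h_t(v_t) = 0$, so some hypothesis with $h(v_t) = 1$ was removed between rounds $t-1$ and $t$. The algorithm's update triggers a nonempty removal only when $y_{t-1} = 0$ and $h_{t-1}(v_{t-1}) = 1$, which is exactly a FP mistake at round $t-1$.

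Combining these two bounds gives at most $2(|\cH| - 1) + 1 = O(|\cH|)$ total mistakes, where the extra $+1$ accounts for the boundary round $t = 1$ at which $\discountedh$ is vacuous. The only delicate point of the argument is the FN-to-FP reduction, which crucially relies on $\hstar$ remaining in $E$ to produce a fully-positive comparator $v^{\star}$ whose historical trace of $1$'s pins down $\discountedh$ and forces $v_t$ to lie at a node that was previously believed positive.
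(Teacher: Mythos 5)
Your proof is correct and follows essentially the same route as the paper's: establish the invariant $\hstar\in E$, bound false positives by $|\cH|$ via strict shrinkage of $E$, and show each false negative at round $t$ forces $\discountedh(v_t)=1$ and hence a removal (i.e., a false positive) at round $t-1$. Your version is in fact slightly more careful than the paper's in pinning down when a nonempty removal can occur and in handling the vacuous round $t=1$.
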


To achieve the second mistake bound of $O\left(d(1-\gamma)^{-1} \cdot k_{\text{in}} k_{\text{out}}\ln(k_{\text{in}} k_{\text{out}}) \right)$, we design a reduction to $\fullinfoadversarial$ as shown in \Cref{alg:reduction2approxbr}. 
Given any algorithm $\cA$ with a mistake bound guarantee in the $\fullinfoadversarial$ setting, we update $\cA$ and implement a new classifier only after a sufficiently large number of repeated mistakes, say $\Phi$ repeated mistakes. When the implemented classifier remains the same $h$ for at least $\Phi$ repeated mistakes, the weighted average of historical classifiers will be close to $h$. Consequently, when agents best respond to $\discountedh$, they are in fact best responding to $h_t$. However, as mentioned at the beginning of \Cref{sec:adv}, tie-breaking in this setting is adversarial.
\begin{algorithm}[htbp]\caption{Reduction from $\noinfo$ to $\fullinfoadversarial$}\label{alg:reduction2approxbr}
    \begin{algorithmic}[1]
        \STATE \textbf{Input:} Parameters $\gamma$; an online learning algorithm $\cA$ that achieves mistake bound $\mistake_{\fullinfoadversarial}(\cA)$ for $\fullinfoadversarial$ agents.
        \STATE \textbf{Initialization:} Update frequency: $\Phi\gets\lceil \ln(\frac{1}{3})/\ln(\gamma)\rceil+1$.\\
        Number of mistakes since the most recent update: $\phi\gets0$.\\
        Clock (current time step) of algorithm $\cA$: $i\gets1$.
        \FOR{$t=1,2,\ldots$}
        \STATE \underline{Commitment}: Commit to classifier $h_t\gets h^{\cA}_{i}$. \text{//Use $\cA$'s $i$-th output.}
        \STATE \underline{Prediction}: Agent $(x_t,y_t)$ arrives, manipulates to $v_t$, and receives label $h_t(v_t)$.
        \STATE \underline{Update}: \textit{//when we make a mistake at the observed node $v_t$}
        \IF{$h_t(v_t)\neq y_t$}
        \STATE 
        $\phi\gets\phi+1$.
        \IF{$\phi==\Phi$}\label{algline:update-A-condition}
        \STATE \textit{//If the learner has made $\Phi$ mistakes since the last update of $\cA$}
        \STATE \underline{Update $\cA$}: Feed $\cA$ with observation $(v_{t},y_{t})$; $i\gets i+1$. 
        \label{algline:update-A}
        \STATE $\phi\gets0$. \textit{//Reset the mistake counter}
        \ENDIF
        \ENDIF
        \ENDFOR
    \end{algorithmic}
\end{algorithm}

By running \Cref{alg:reduction2approxbr}, the total number of mistakes is bounded by the mistake bound of $\cA$ multiplied by the number of mistakes between each update of $\cA$, which is $\Phi$.

\begin{restatable}[Mistake bound of \Cref{alg:reduction2approxbr}]{lemma}{lemmareduction}
    \label{lemma:mistake-alg-red2approx}
    Let $\mistake_{\fullinfoadversarial}(\cA,\cH)$ be an upper bound on the number of mistakes that $\cA$ makes on any strategic realizable sequence of agents in the  $\fullinfoadversarial$ setting. 
    Then the $\noinfo$ setting,
    i.e., $v_t\in\BR_{\bar h_t^\gamma}(x_t)$ for all $t\ge1$, the mistake bound of \Cref{alg:reduction2approxbr} 
    satisfies
    \begin{align*}
        \mistake_{\noinfo}(\cH)\le\left(\left\lceil\frac{\ln(\frac{1}{3})}{\ln(\gamma)}\right\rceil+1\right)\cdot\mistake_{\fullinfoadversarial}(\cA,\cH).
    \end{align*} 
\end{restatable}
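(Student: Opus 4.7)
My plan is to reduce the $\noinfo$ setting to a virtual $\fullinfoadversarial$ interaction fed to the inner algorithm $\cA$. The key invariant I will establish is that each completed ``epoch'' of $\Phi$ consecutive mistakes under a single fixed classifier corresponds to exactly one mistake observed by $\cA$, so the total mistake count inflates by at most a factor of $\Phi$ over $\mistake_{\fullinfoadversarial}(\cA,\cH)$.

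The main technical step is to control $\discountedh$ when \Cref{alg:reduction2approxbr} has been committing to the same classifier $h$ for a consecutive stretch. Writing the weight on the past round $s$ as $w_s=\frac{(1-\gamma)\gamma^{t-1-s}}{1-\gamma^{t-1}}$ and summing over the last $L$ rounds in which $h_s=h$, the total weight placed on $h$ is
\begin{align*}
\alpha=\frac{1-\gamma^L}{1-\gamma^{t-1}}\ge 1-\gamma^L.
\end{align*}
The choice $\Phi=\lceil \ln(1/3)/\ln(\gamma)\rceil+1$ forces $\gamma^{\Phi-1}\le 1/3$, so the assumption $L\ge\Phi-1$ yields $\alpha\ge 2/3$. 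Decomposing $\discountedh=\alpha\cdot h+(1-\alpha)\cdot e$ with $e(\cdot)\in[0,1]$, I then show $\BR_{\discountedh}(x)\subseteq \BR_h(x)$: if $v^\star\in\BR_{\discountedh}(x)$ had $h(v^\star)=0$ while some out-neighbor $v'\in\Nout[x]$ satisfied $h(v')=1$, then $\discountedh(v^\star)-\discountedh(v')\le(1-\alpha)-\alpha=1-2\alpha<0$, contradicting $v^\star$'s optimality. Thus the observed $v_t$ is a legal $\fullinfoadversarial$ best response to $h$ under arbitrary tie-breaking.

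With this containment in hand, the reduction becomes direct. Whenever \Cref{alg:reduction2approxbr} triggers an update at $\phi=\Phi$, the current classifier $h_t=h^{\cA}_i$ has been held for at least $\Phi$ consecutive rounds, so $L\ge \Phi-1$ and the above step applies. The sub-sequence of epoch-terminating agents $(x_t,y_t)$ inherits strategic realizability under $\cH$ from the original sequence, and each corresponding observation $(v_t,y_t)$ is a genuine mistake for $\cA$ in the virtual transcript. Hence the number of completed epochs is at most $\mistake_{\fullinfoadversarial}(\cA,\cH)$, and multiplying by the $\Phi$ mistakes per epoch (with the possibly incomplete terminal epoch contributing fewer than $\Phi$ and absorbed into the bound) yields the claimed inequality.

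The main obstacle is calibrating $\Phi$: I need a \emph{strict} weight majority $\alpha\ge 2/3$, not merely $\alpha>1/2$, since the worst-case gap $1-2\alpha$ must be bounded away from zero uniformly across all rounds $t$ (and the tighter $\alpha\ge 1-\gamma^L$ estimate must hold even when $1-\gamma^{t-1}$ is close to $1$). Care is also needed with the inequality $\gamma^{\Phi-1}\le 1/3$, because $\ln\gamma<0$ flips the direction when solving for $\Phi$. Beyond these points, the argument is essentially epoch-level bookkeeping.
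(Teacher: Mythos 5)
Your proposal is correct and follows essentially the same route as the paper's proof: both isolate the update rounds, show that holding the classifier for $\Phi-1$ prior rounds places weight at least $2/3$ on $h_t$ in $\discountedh$ (via $\gamma^{\Phi-1}\le 1/3$), deduce that the observed manipulation is a legal best response to $h_t$ under arbitrary tie-breaking, and multiply the resulting $\mistake_{\fullinfoadversarial}(\cA,\cH)$ bound by $\Phi$. Your single containment $\BR_{\discountedh}(x)\subseteq\BR_{h_t}(x)$ merely repackages the paper's two-case analysis (all-negative neighborhood versus some positive neighbor) into one contradiction argument; the substance is identical.
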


We defer the proof of \Cref{lemma:mistake-alg-red2approx} to \Cref{app:mistake-alg-red2approx}.
Combining \Cref{lemma:mistake-alg-red2approx} and \Cref{claim:approx-br-upper-bound}, we achieve the mistake bound of $O\left((1-\gamma)^{-1} \cdot k_{\text{in}} k_{\text{out}}\ln(k_{\text{in}} k_{\text{out}}) d\right)$.

\subsection{Lower Bound}
In this subsection, we will provide 
the lower bound on $\mistake_{\noinfo}$, which shows that both the $\kin$ factor and the $(1-\gamma)^{-1}$ factor are unavoidable in certain regimes. We defer the proof to \Cref{app:proof-discounted-lower}.
\begin{restatable}
    [Lower bound on $\mistake_{\noinfo}$]{theorem}{thmlowerbounddiscounted}
    \label{thm:discounted-lower}
    For any $d\in\NN$ and discount factor $\gamma \in (0,1)$, there exists a hypothesis class $\cH$ with Littlestone dimension $d$ and a manipulation graph $G$ with $\kin=\kout=2$, such that in the strategic realizable setting, when agents best-respond to $\discountedh$, any deterministic learning algorithm must suffer a mistake bound of
    \[
    \mistake_{\noinfo} = \Omega\left(\min\!\Big\{d(1-\gamma)^{-1}, |\cH|\right\}\Big).
    \]
    In the special case of $\gamma\to0$, for any $\kin,\kout\in\NN$ where $\kout\ge(1+\Omega(1))\cdot\kin$, there exists an instance with maximum in (resp. out)-degree $\kin$ (resp. $\kout$), such that any deterministic learning algorithm must have 
    \[
    \mistake_{\noinfo}=\Omega(d\cdot\kin\cdot\kout).
    \]
\end{restatable}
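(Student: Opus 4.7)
The plan is to prove the two lower bounds by two separate adversarial constructions, since the two regimes ($\gamma$ bounded away from $0$ versus $\gamma\to 0$) are driven by different phenomena: the first by the \emph{inertia} of $\discountedh$, the second by the \emph{adversarial tie-breaking} already studied in Section~3.

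\textbf{First bound ($\Omega(\min\{d(1-\gamma)^{-1},|\cH|\})$ with $\kin=\kout=2$).} I would build the graph as a disjoint union of $d$ independent ``delay gadgets,'' each a pair $\{a_i,b_i\}$ joined by the directed edges $a_i\to b_i$ and $b_i\to a_i$ together with self-loops (so every node has in- and out-degree exactly $2$). The hypothesis class encodes a hidden bit per gadget, giving $\ld(\cH)\geq d$ by a straightforward tree construction, plus enough copies per bit so that $|\cH|$ is large (roughly $2^d\cdot\lceil(1-\gamma)^{-1}\rceil$). On gadget $i$, the two hypothesis types are $h_A$ with $h_A(a_i)=1,h_A(b_i)=0$ and $h_B$ with $h_B(a_i)=0,h_B(b_i)=1$, and I would also include an ``all-negative'' classifier $h_0$ to make label-$0$ agents on the gadget realizable. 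The adversary processes gadgets sequentially. Within gadget $i$, it mixes agents of type $(a_i,0)$ (realizable only if $\hstar$ restricts to $h_0$ on gadget $i$) with agents of type $(a_i,1)$ (realizable under both $h_A$ and $h_B$). After the adversary commits the learner, via a label-$0$ agent, to running a classifier that looks like $h_0$ on gadget $i$, it then forces the learner to switch to either $h_A$- or $h_B$-shaped classifier. Because $\discountedh$ retains weight $\gamma^k$ on every classifier used $k$ rounds ago and it takes $k=\Theta((1-\gamma)^{-1})$ rounds for the $h_0$-mass to decay below the $h_A/h_B$-mass, the agent continues to perceive an ``all-zero'' gadget and the adversary uses arbitrary tie-breaking to push $v_t$ to whichever of $\{a_i,b_i\}$ the learner's \emph{current} classifier labels $0$. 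This yields $\Omega((1-\gamma)^{-1})$ forced mistakes per gadget and $\Omega(d(1-\gamma)^{-1})$ overall; the cap at $|\cH|$ handles the regime where the construction runs out of fresh hypotheses.

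\textbf{Second bound ($\Omega(d\kin\kout)$ as $\gamma\to 0$).} As $\gamma\to0^+$, the leading coefficient in $\discountedh$ concentrates on $h_{t-1}$, so the $\noinfo$ setting degenerates exactly to: ``the agent best-responds to the previous round's classifier with adversarial tie-breaking.'' I would reuse essentially verbatim the lower-bound construction underlying \Cref{thm:adv-lower} (the star-gadget of \Cref{fig:subgraph}, replicated inside a Littlestone tree of depth $d$), and argue that the one-round shift from $h_t$ to $h_{t-1}$ does not help the learner. The adversary picks the agent $(x_t,y_t)$ \emph{after} seeing $h_{t-1}$, so the same argument that produced $\Omega(\kin)$ mistakes per identification of a distinguished leaf $x_{i^\star}$ goes through (the adversarial tie-break sends $v_t$ to the star center, hiding $x_{i^\star}$); and whether the learner varies $h_t$ away from $h_{t-1}$ does not matter because the adversary only needs the manipulated vertex $v_t$ to be mislabeled by $h_t$, which is forced by realizability constraints on the hypothesis class. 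Scaling with Littlestone dimension $d$ and out-degree $\kout$ using the same product construction as Section~3.2 then yields $\Omega(d\kin\kout)$.

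\textbf{Main obstacle.} The hard part will be formalizing the delay argument in the first bound against fully adaptive learners, who may rapidly interleave many classifiers rather than make a single clean ``switch.'' My plan is to track a potential function equal to the cumulative $\discountedh$-mass that each gadget has placed on the ``all-negative'' type versus the $A/B$ types, and to show two facts: (i) this potential can move by at most $O(1-\gamma)$ per round (a direct computation from the $\discountedh$ formula), and (ii) for the agent's best response on gadget $i$ to align with the learner's current classifier, the potential must swing by a constant. Together these give $\Omega((1-\gamma)^{-1})$ rounds of misalignment per gadget, which by an exchange argument each translate into a mistake under adversarial tie-breaking. A secondary subtlety is ensuring that observations collected during the delay period are uninformative about the hidden bits of \emph{other} gadgets, so that the $d$ gadgets genuinely compose additively; this I would handle by making gadgets operate on disjoint vertex sets and by choosing the adversary's inter-gadget transitions to reveal only the one bit currently being tested.
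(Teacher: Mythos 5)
There are genuine gaps in both parts of your proposal.

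\textbf{First bound.} Your two-node gadget $\{a_i,b_i\}$ with mutual edges cannot produce the $(1-\gamma)^{-1}$ blow-up, for a structural reason. Once the adversary commits gadget $i$ to $\hstar\in\{h_A,h_B\}$ (which it must do the moment it presents any label-$1$ agent there, and note that presenting both $(a_i,0)$ and $(a_i,1)$ in one run violates strategic realizability, since the same initial feature cannot carry two labels under a single $\hstar$), \emph{every} agent on that gadget has true label $1$: under $h_A$, both $a_i$ and $b_i$ can reach the positive node $a_i$, so $\hstar(\BR_{\hstar}(\cdot))=1$ everywhere on the gadget. The learner can then label the whole gadget positive and never err again, regardless of where the lagging estimate $\discountedh$ sends the agent. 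So your "delay'' yields only $O(1)$ mistakes per gadget, i.e.\ $O(d)$ total. The mechanism that actually forces $(1-\gamma)^{-1}$ mistakes (used in the paper, via the three-node stars of \Cref{fig:weighted_lb}) requires a \emph{squeeze}: a node ($x_{i^\star,L}$) that is a realizable \emph{negative} resting place (it cannot reach the positive node $x_{i^\star,R}$), yet is where the accumulated history $\discountedh$ keeps sending a realizable \emph{positive} agent ($x_{i^\star,B}$). The learner then cannot label that node positive (false positives on realizable $(x_{i^\star,L},0)$ agents) nor avoid false negatives until the history decays, which takes $\Theta((1-\gamma)^{-1})$ rounds \emph{after} the learner already knows $\hstar$. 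Your gadget has no such third node, so your potential-function argument tracks "misalignment'' that never converts into mistakes.

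\textbf{Second bound.} Your claim that the one-round lag "does not help the learner'' and that the construction of \Cref{thm:adv-lower} can be reused verbatim is exactly the step that fails. Because the agent responds to $h_{t-1}$ while being graded by $h_t$, the learner can decouple the two: set $h_{t-1}$ to label all middle-layer nodes of \Cref{fig:lb} positive (so each leaf $x_{i,j}$ manipulates up to its \emph{specific} parent $x_i$, revealing which subtree it came from) and set $h_t$ all-negative. Now the adversary cannot force an uninformative mistake: any mistake reveals information about $i^\star$ and lets the learner eliminate up to $k_2$ hypotheses at once. The paper repairs this by adding a clique on the middle layer so that all of $\{x_0,x_1,\dots,x_{k_1}\}$ can be funneled by tie-breaking to a single node independent of the agent's origin; this modification is also why the degrees change and the hypothesis $\kout\ge(1+\Omega(1))\cdot\kin$ appears in the statement, a condition your verbatim-reuse plan does not account for.
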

In \cref{thm:discounted-lower}'s lower bound, the $\kin$ factor arises due to arbitrary tie-breaking, which is consistent with the lower bound in the $\fullinfoadversarial$ setting. This component of the lower bound is established using a modified version of the instance used in the proof of \Cref{thm:adv-lower}. 

On the other hand, the $(1-\gamma)^{-1}$ factor reflects the cost of having historically implemented incorrect hypotheses. In fact, this lower bound still holds even when agents adopt standard tie-breaking.
The construction is based on replicating a simple star graph (with one center and two leaves) by $|\cH|$ times, as illustrated in \Cref{fig:weighted_lb}. Each hypothesis $h_i\in\cH$ labels the right leaf of the $i$-th star as positive and the left leaves of all the other stars as positive, i.e., $h_i = \1{x = x_{i,R} \text{ or } x\in \{x_{j,L}|j\neq i\}}$.

\begin{figure}[htbp]
    \centering
    \begin{tikzpicture}[>=stealth, node distance=1.2cm,scale=0.8]

    \node[draw, circle, fill=black, inner sep=1pt, label=below:$x_{1,B}$] (x1b) at (0,0) {};
    \node[draw, circle, fill=black, inner sep=1pt, label=above:$x_{1,L}$] (x1l) at (-0.8,1.2) {};
    \node[draw, circle, fill=black, inner sep=1pt, label=above:$x_{1,R}$] (x1r) at (0.8,1.2) {};
    
    \draw[<->] (x1l) -- (x1b);
    \draw[<->] (x1r) -- (x1b);
    
    \node[draw, circle, fill=black, inner sep=1pt, label=below:$x_{i,B}$] (xib) at (3,0) {};
    \node[draw, circle, fill=black, inner sep=1pt, label=above:$x_{i,L}$] (xil) at (2.2,1.2) {};
    \node[draw, circle, fill=black, inner sep=1pt, label=above:$x_{i,R}$] (xir) at (3.8,1.2) {};
    
    \draw[<->] (xil) -- (xib);
    \draw[<->] (xir) -- (xib);

    \node[draw, circle, fill=black, inner sep=1pt, label=below:$x_{H,B}$] (xhb) at (6,0) {};
    \node[draw, circle, fill=black, inner sep=1pt, label=above:$x_{H,L}$] (xhl) at (5.2,1.2) {};
    \node[draw, circle, fill=black, inner sep=1pt, label=above:$x_{H,R}$] (xhr) at (6.8,1.2) {};
    
    \draw[<->] (xhl) -- (xhb);
    \draw[<->] (xhr) -- (xhb);
    
    \node at (1.5,0.6) {$\cdots$};
    \node at (4.5,0.6) {$\cdots$};

    \end{tikzpicture}
    \caption{\footnotesize The graph used to establish the lower bound on $\mistake_{\noinfo}$ for general $\gamma\in(0,1)$.}
    \label{fig:weighted_lb}
\end{figure}

Note that if the agent best responds to the current classifier (as in the $\fullinfostandard$ or $\fullinfoadversarial$ setting), then this instance is easy to learn as $d,\kin,\kout$ are all constants. In fact, the learner can guarantee only a single mistake: by initially predicting the left leaf of each star as positive, the learner will eventually make a mistake on the star corresponding to the optimal hypothesis $h_{i^\star}$, at which point the learner can then switch to implementing $h_{i^\star}$ and makes no further mistakes.

However, the above approach incurs $(1-\gamma)^{-1}$ mistakes in the $\noinfo$ setting. The adversary can repeatedly present correctly labeled agents to build up weight on $x_{i^\star,L}$ in the learner's $\gamma$-weighted classifier $\discountedh$, before inducing a mistake in the $i^\star$-th star. Once the learner identifies $i^\star$  through the mistake and attempts to switch to the correct prediction over the $i^\star$-th star by $\1{x_{i^\star,R}}$, 
 it must continue implementing this new hypothesis for $(1-\gamma)^{-1}$ rounds of repeated deployment of $\1{x_{i^\star,R}}$ before the best response of agent $x_{i^\star,B}$ switches from $x_{i^\star,L}$ to $x_{i^\star,R}$. During this transition --- which forces the agent to ``forget'' the influence of past classifiers --- the learner continues to make mistakes.

In the full proof of \Cref{thm:discounted-lower} which we present in \Cref{app:proof-discounted-lower}, we show that this $(1-\gamma)^{-1}$ blow-up in mistakes is inevitable for any learning algorithm, unless $|\cH|$ is small and the learner adopts a more conservative algorithm (e.g., \Cref{alg:union}) to achieve mistake bound of $|\cH|$.

\section{Learning Agents}
\label{sec:learning-agents}
In recent years, repeated Stackelberg games with learning agents have become a rapidly growing area of study (see~\citet{brown2024learning,haghtalab2024calibrated,collina2024efficient} for a non-exhaustive list). In such settings, in each round, the decision maker first selects a policy (corresponding to the classifier in strategic classification), and the agents subsequently choose an action (corresponding to which neighbor to manipulate to) by running an online learning algorithm instead of directly best responding. However, these results cannot be directly applied to strategic classification, as the agents' action sets, $N_{\text{out}}[x_t]$, vary over time. In contrast, the existing literature typically focuses on games such as auctions~\citep{braverman2018selling,cai2023selling} and contract design~\citep{guruganesh2024contracting}, which assume a static action set for agents throughout the game.

Defining learning agents in strategic classification presents unique challenges, since we need meaningful performance guarantees for their learning algorithms. While regret (or variations such as swap-regret) is a standard performance measure in many game-theoretic settings, it does not translate directly to strategic classification for two reasons. 
\begin{enumerate}
    \item The agents' action sets, $N_{\text{out}}[x_t]$, vary over time, making the standard regret notions inapplicable because they usually assume a static action set.
Although adopting \emph{sleeping regret} (where an action/feature vector is only feasible in rounds where it is reachable from the agent's initial feature vector) as a performance measure could address this issue, it remains an open direction due to the second challenge below.
\item Classical regret bounds often depend on the size of the action set. In graph-based strategic classification, actions correspond to nodes, and the total number of possible actions can be extremely large or even infinite, rendering traditional regret guarantees impractical or meaningless. 
Even in a geometric setting \citep[e.g.][]{dong2018strategic,shen2024mistake} where features are $d$-dimensional vectors in a Euclidean space and can be grouped based on geometric adjacency, \citet{haghtalab2024calibrated} have highlighted a similar challenge of overcoming exponential dependency on the dimension $d$.
\end{enumerate}

Though existing results cannot be directly applied to obtain performance guarantees for the agent's learning algorithm, we still aim to provide some insights into learning agents. A popular category of learning algorithms considered in this context is mean-based learning algorithms. Adapting the definition to the context of strategic classification, mean-based algorithms can be defined as follows.

\begin{definition}[Mean-Based Learning Algorithms~\citep{braverman2018selling}]
Let $\bar{h}_t=\frac{1}{t-1}\sum_{\tau=1}^{t-1}h_{\tau}$ be the empirical average of historical classifiers. An algorithm is $\eta$-mean-based if it is the case that whenever $\bar{h}_t(v')< \bar{h}_t(v) - \eta$ for $v,v'\in N_\text{out}[x_t]$, the probability that the algorithm chooses to manipulate to node $v'$ is at most $\eta$. An algorithm is called mean-based if it is $\eta$-mean-based for some $\eta = o(1)$. 
\end{definition}
Mean-based algorithms encompass most standard online learning algorithms, such as the Multiplicative Weights algorithm, the Follow-the-Perturbed-Leader algorithm, the $\epsilon$-Greedy algorithm, and so on~\citep{braverman2018selling}. 
Our main $\noinfo$ setting with $\gamma\to1$, also falls in the mean-based class with $\eta=0$.
As shown in Section~\ref{sec:weighted}, the decision maker suffers $\Theta(|\cH|)$ mistakes in this case. However, the situation can be worse for other mean-based algorithms. We observe that if $\eta$ is non-zero, the decision maker could suffer an infinite number of mistakes even in the strategic realizable setting.

More specifically, given the average of historical classifiers, $\bar{h}_t=\frac{1}{t-1}\sum_{\tau=1}^{t-1}h_{\tau}$, let \[v_1 = \argmax_{v\in N_{\text{out}}[x_t]} \bar{h}_t (v),\qquad v_2 = \argmax_{v\in N_{\text{out}}[x_t]\setminus\{v_1\}} \bar{h}_t (v)\] be the empirically best and second best neighbors, respectively. For a mean-based algorithm $\cA$, we define a monotonically decreasing function $\sigma_t^\cA:[0,1]\mapsto [0,1]$ such that $\sigma_t^\cA(\bar{h}_t (v_1)-\bar{h}_t (v_2))$ is a lower bound of the probability of choosing the second best neighbor $v_2$. 
For example, for Multiplicative Weights where the probability of choosing a neighbor $v$ is proportional to $\exp(\epsilon_t\cdot \bar{h}_t(v))$ with learning rate $\epsilon_t$, $\sigma_t^\cA(z) \geq \frac{\exp(-\epsilon_t \cdot (t-1) \cdot z)}{k_{\text{out}}}$. For $\epsilon$-Greedy with exploration rate $\epsilon_t$, $\sigma_t^\cA(z) \geq {\epsilon_t}/{k_{\text{out}}}$. 

The function $\sigma_t^{\cA}$ captures the bounded rationality of learning agents: as the gap between the best and second-best manipulations shrink, they have higher probability of choosing the suboptimal action even when the learner is already implementing the optimal classifier. This leads us to the following observation. See \Cref{app:mean-based} for a formal version and the proof.
\begin{observation}[informal]
\label{obs:informal}
    There exists a hypothesis class $\cH= \{h_1,h_2\}$ of size $2$ and a manipulation graph $G$ with $k_{\text{in}},k_{\text{out}}\leq 3$ such that when agents run Multiplicative Weights/$\epsilon$-Greedy with $\epsilon_t$ set to be $\frac{1}{\sqrt{T}}$ or $\frac{1}{\sqrt t}$ (which are the most common choices of parameters), the decision maker will suffer infinite mistakes as $T$ goes to infinity.
\end{observation}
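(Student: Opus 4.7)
The plan is to exhibit a minimal 2-node construction in which mean-based agents force a nonvanishing per-round probability of mistake, then aggregate this into a diverging mistake count under the specified parameter choices. Take the manipulation graph on $\{x, v\}$ with edge $x \to v$ and self-loops (giving $k_\text{in}, k_\text{out} \leq 2$), define $h_1(v)=1, h_1(x)=0$ and $h_2(v)=0, h_2(x)=1$, set $\hstar = h_1$, and let every arriving agent be $(x, 1)$. Strategic realizability under $\hstar$ is immediate since $\BR_{\hstar}(x) = \{v\}$ and $h_1(v)=1$. The setup is symmetric in the learner's choice of classifier: when $h_t = h_1$, a mistake occurs iff $v_t = x$; when $h_t = h_2$, a mistake occurs iff $v_t = v$. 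Crucially, the sequence is also realizable under $h_2$ (the agent would stay at $x$), so no learner strategy is ``logically'' wrong---all mistakes must come from the agent's stochastic manipulation being misaligned with the learner's deployed label.

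The key observation is that under any mean-based algorithm, the per-round mistake probability is lower-bounded by $\min\{\sigma_t^\cA(z_t),\, 1 - \sigma_t^\cA(z_t)\}$, where $z_t := |\bar h_t(v) - \bar h_t(x)|$. Indeed, if $h_t$ is aligned with the majority of history, mistakes come from exploration at rate $\geq \sigma_t^\cA(z_t)$; if $h_t$ disagrees with the majority of history, mistakes arise from the agent's exploitation of $\bar h_t$ at rate $\geq 1 - \sigma_t^\cA(z_t)$. For $\epsilon$-Greedy with $\sigma_t^\cA \geq \epsilon_t/k_\text{out}$, this immediately yields per-round mistake probability $\geq \epsilon_t/2$, so $\EE{M} \geq \Omega\!\left(\sum_{t=1}^T \epsilon_t\right) = \Theta(\sqrt T)$ for both $\epsilon_t = 1/\sqrt T$ and $\epsilon_t = 1/\sqrt t$. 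For Multiplicative Weights with $\sigma_t^\cA(z) \geq \exp(-\epsilon_t(t-1)z)/k_\text{out}$ and $\epsilon_t = 1/\sqrt T$, the greedy learner $h_t\equiv h_1$ has $z_t = 1$ eventually, and a standard integral approximation gives $\EE{M} \geq \frac{1}{2}\sum_{t=1}^T \exp(-(t-1)/\sqrt T) = \Theta(\sqrt T)$.

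The main obstacle is extending the MW analysis to adaptive learners that modulate $z_t$ to trade off exploration-driven mistakes (scaling with $\sigma_t^\cA(z_t)$) against deterministic mistakes (scaling with $1 - \sigma_t^\cA(z_t)$). I would handle this via a potential argument on $W_t := \sum_{\tau\leq t}(h_\tau(v) - h_\tau(x))$, partitioning rounds by whether $|W_t|/t$ is above or below a threshold of order $1/\sqrt T$ and showing that each regime contributes $\Omega(\sqrt T)$ to $\EE{M}$---large $|W_t|/t$ forces the learner's current classifier to disagree with $\bar h_t$ on many rounds, while small $|W_t|/t$ keeps $\sigma_t^\cA(z_t)$ bounded away from $0$. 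A further subtlety is MW with $\epsilon_t = 1/\sqrt t$, where a constant-gap analysis only gives the convergent sum $\sum_t \exp(-\sqrt t)$; this case likely requires a construction with multiple star gadgets (analogous to the proof of \Cref{thm:discounted-lower}) so that the learner cannot simultaneously maintain large gaps across every relevant pair of features, thereby preserving $\sum_t \sigma_t^\cA(z_t) = \omega(1)$.
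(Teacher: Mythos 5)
There is a fatal flaw in your construction: in the two-node graph with $\hstar=h_1=\1{x=v}$, \emph{every} strategically realizable agent has true label $1$ (an agent starting at $x$ has $\BR_{\hstar}(x)=\{v\}$ with $\hstar(v)=1$, and an agent starting at $v$ likewise), so the learner that implements the all-positive classifier $h_t\equiv 1$ in every round makes zero mistakes regardless of how the agent manipulates. The learner is not restricted to playing hypotheses in $\cH$, so your claimed dichotomy ``mistake probability $\ge\min\{\sigma_t^\cA(z_t),1-\sigma_t^\cA(z_t)\}$'' only covers proper learners and the bound collapses. Any valid instance must contain negatively labeled agents that punish over-positive classifiers; this is why the paper uses a star $\{x_B,x_L,x_R\}$ with $\cH=\{\1{x=x_L},\1{x=x_R}\}$, where the node not selected by $\hstar$ supplies label-$0$ agents.

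The second missing idea is the adaptive choice of the target. You fix $\hstar=h_1$ in advance, which (as you note) leaves the MW-with-$\epsilon_t=1/\sqrt{t}$ case unresolved, since a learner maintaining a constant gap drives $\sigma_t^\cA$ down to $\exp(-\Theta(\sqrt{t}))$, a convergent sum. The paper instead spends the first $T/2$ rounds on agents $(x_B,1)$ that are consistent with both hypotheses, and only then sets $\hstar$ to be the hypothesis whose positive node the learner's history has favored \emph{less}. This guarantees that at time $T/2$ the relevant mean-based gap is nonpositive and can grow by at most $1$ per round, so $z_t\le\frac{t-1-T/2}{t-1}$ and $\sigma_t^\cA(z_t)=\Omega(1)$ for the next $\Theta(\sqrt{T})$ rounds under MW with either $\epsilon_t=1/\sqrt{T}$ or $\epsilon_t=1/\sqrt{t}$; a case analysis on $h_t$ versus $\bar h_t$ then forces a mistake with probability $\Omega(\min\{\sigma_t^\cA(z_t),c_t\})$ in each such round. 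Your suggested fix of replicating star gadgets is in the right spirit but is not needed once the target is chosen adaptively after the burn-in; without the burn-in and the mixed labels, no amount of replication repairs the two-node instance.
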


\section{Discussion and Future Directions}
In this paper, we investigate the question of whether decision-makers should make classifiers transparent in online strategic classification. We focus on a non-transparent setting in which agents do not observe the current classifier but manipulate their features based on a $\gamma$-weighted average of historical classifiers. We show that compared with the fully transparent setting, hiding the classifiers leads to an additional multiplicative factor of $(1-\gamma)^{-1}$ or $\kin$ in the decision-maker's mistake bound. Notably, the $\kin$ factor is unavoidable even in the transparent setting when the agent switch from standard to adversarial tie-breaking among multiple best responses. We complement these lower bounds with algorithms whose mistake bounds grow by at most a $\tilde{O}( (1-\gamma)^{-1} \cdot \kin)$ factor relative to the the transparent setting.

There remains several open directions for future work. An immediate question is to close the gap between upper and lower bounds. While our lower bound shows the $\kin$ and $(1-\gamma)^{-1}$ factors are individually unavoidable in different regimes, it remains to extent our construction in \Cref{thm:discounted-lower} to simultaneously deal with any combinations of $\kin$ and $\gamma$. Another open problem is extending our analysis to \emph{randomized algorithms} (where the adversary's choice of the agent sequence cannot depend on the realized classifiers) or the \emph{agnostic setting} (where the optimal classifier makes nonzero mistakes). The characterizations for both setting are particularly challenging and has remained open even in the standard transparent setting~\citep{ahmadi2023fundamental,strategic-littlestone24,cohen2024learnability}.
Finally, as discussed in \Cref{sec:learning-agents}, it remains open to consider other models of learning agents with nontrivial performance guarantee for both the decision-maker and the agents. We believe that \emph{sleeping regret} is a natural candidate for agent's performance measure, but deriving meaningful sleeping regret bounds is challenging due to the large action space.

\section*{Acknowledgments}
We thank Avrim Blum for the helpful discussions. HS was supported by Harvard CMSA.

  \bibliographystyle{plainnat}
  \bibliography{ref-arxiv}
  
  \newpage
  \appendix

\section{Omitted Proofs from Section~\ref{sec:adv}}
\subsection{Proof of \Cref{thm:adv-online-reduction}}
\label{app:proof-adv-online-reduction}
\thmupperboundadv*
\begin{proof}[Proof of \Cref{thm:adv-online-reduction}]
Suppose $\cA$ is the standard online learning algorithm that is input to \Cref{alg:reduction2online}, and let $M$ be the number of mistakes that $A$ makes on any realizable sequence.
To keep track of the algorithm's progress, we use $W_t$ to denote the total weight of experts at the beginning of round $t$, i.e.,
\[
W_t:=\sum_{A\in E}w_A^{(t)}.
\]
We first show that whenever \Cref{alg:reduction2online} makes a mistake at round $t$, the total weight satisfies
\[
W_{t+1}\le \left(1-\frac{1}{4(\kout+1)(\kin +1)}\right)\cdot W_t.
\]
We analyze the following two cases.
\paragraph{Case 1: False positive mistake.}
In this case, the agent is labeled as $h_t(v_t)=1$ after manipulation, but the true label is $y_t=0$. Since the true label is $0$, the realizability of input sequence $S$ implies that the optimal hypothesis $\hstar$ should label the entire neighborhood $\Nout[x_t]$---including $v_t$---as $0$. 
Therefore, we proceed by updating all experts that predict $A(v_t)=1$ by feeding them with the example $(v_t,0)$, and halving their weights. Since $h_t(v_t) = 1$, according to the prediction rule in Line~\ref{alg-line:k-fraction-weight}, we must have 
\[\sum_{A \in E: A(v_t) = 1}w_{A}^{(t)} \geq \frac{W_t}{2(k_{\text{out}}+1)(k_{\text{in}}+1)}.\] Therefore, the updated total weight satisfies 
\begin{align*}
    W_{t+1}=&\ W_t-\frac{1}{2}\sum_{A \in E: A(v_t) = 1}w_{A}^{(t)}\\
\le& \left(1-\frac{1}{4(\kout+1)(\kin +1)}\right)\cdot W_t
\end{align*}

\paragraph{Case 2: False negative mistake.}
In this case, the agent is predicted as $h_t(v_t)=0$, but the true label is $y_t = 1$. This implies that our learner $h_t$ labeled the entire out neighborhood $\Nout[x_t]$ with $0$, otherwise, if there is an $x'\in N_{\text{out}}[x_t]$ with $h_t(x') = 1$, $x_t$ would have manipulated to $x'$ and received a positive label. In addition, since the sequence of agents is realizable, we know that $h^\star(\BR_{\hstar}(x_t))=1$, i.e., $\hstar$ must label some node in $N_{\text{out}}[x_t]$ as $1$. 

However, the main challenge in this adversarial tie-breaking setting is that we do not observe the true label $x_t$. Instead, we know that $x_t$ can be any of the in-neighbors of $v_t$ that satisfies $h_t(\Nout[x])= 0$ (which means $h_t$ labels all nodes in $N_\text{out}[x]$ as $0$). 
Thus, we can construct a set of candidates for $x_t$ as follows (as described in Line~\ref{line:possible-initial-features} of \Cref{alg:reduction2online}):
\[
\tilde\cX_t:=\left\{x\in \Nin[v_t]\mid h_t(\Nout[x])=0\right\}
\]
Since $\tilde\cX_t$ contains all possible choices of $x_t$, any expert that labels the union of their out-neighborhood (i.e., $\tilde\cV_t:=\cup_{x\in\tilde\cX_t}\Nout[x_t]$) as all negative must be wrong.
Based on our prediction rule, we can show that the total weight of such experts should take up a significant fraction of $W_t$:
\begin{align*}
    &\sum_{A\in E:\forall x'\in 
    \tilde\cV_t,
    A(x') = 0}w_{\cA}^{(t)}\\
    \ge &  W_t - \sum_{x'\in 
    \tilde\cV_t} \sum_{A\in E: A(x') = 1} w_{A}^{(t)}\\
    \overset{\text{(a)}}{\ge}& \left(1- \frac{|\tilde\cV_t|}{2(k_{\text{out}}+1)(k_{\text{in}}+1)}\right)W_t\\
    \overset{\text{(b)}}{\ge}&  \left(1-
    \frac{\kin\cdot\kout}{2(k_{\text{out}}+1)(k_{\text{in}}+1)}\right)W_t \\
    \ge &\ \frac{1}{2}W_t.
\end{align*}
In the above inequalities, step (a) is because all nodes in $\tilde\cV_t$ are labeled as $0$ by $h_t$, which, according to the prediction rule in Line~\ref{alg-line:k-fraction-weight}, implies that for any $x'\in \tilde\cV_t$, we have $\sum_{A\in E:A(x')=1}w_A^{(t)}<\frac{W_t}{2(k_{\text{out}}+1)(k_{\text{in}}+1)}$. 
Step (b) is because the size of $\tilde\cV_t$ is at most $\kin\cdot\kout$.

According to the algorithm,
we split each of such $A$ into $|\tilde\cV_t|$ experts---i.e., $\{A(x',1)\mid x'\in \tilde\cV_t\}$
---and split the weight $\frac{1}{2} w_{A}$ equally among them. Thus, we have $W_{t+1} \leq (1-\frac{1}{4})W_t\le \left(1-\frac{1}{4(\kout+1)(\kin +1)}\right)\cdot W_t$.

\paragraph{Deriving the final mistake bound.}
We have proved that whenever \Cref{alg:reduction2online} makes a mistake, the total weight $W_t$ is reduced by a multiplicative factor of $\frac{1}{4(k_{\text{out}}+1)(k_{\text{in}}+1)}$. Let $N$ denote the total number of mistakes that \Cref{alg:reduction2online} makes. Then we have total weight in the final round is at most $ \left(1-\frac{1}{4(k_{\text{out}}+1)(k_{\text{in}}+1)}\right)^N$.

On the other hand, note that there exists an expert $A^\star\in E$ that is an execution of $\cA$
with trajectory $(v_t',y_t)$ during the mistake rounds, where $v_t'\in\BR_{\hstar}(x_t)$. Since the agent sequence $(x_t,y_t)_{t\ge1}$ is strategic realizable, we have $\hstar(\BR_{\hstar}(x_t))=y_t$, which implies $\hstar(v_t')=y_t$. In other words, the inputs of $A^\star$ are realizable under $\hstar$. According to the mistake bound assumption of $\cA$, $A^\star$ makes at most $M$ mistakes.

Now we analyze the weight of $A^\star$.
At each mistake round $t$, if the $A^\star$ makes a false positive mistake, its weight is reduced by half.
If $A^\star$ made a false negative mistake, it is split into a few experts, one of which is fed by $(v_t',y_t)$ where $v_t'\in \BR_{h^\star}(x_t)$.
This specific new expert's weight is reduced by at most $\frac{1}{2(k_{\text{out}}+1)(k_{\text{in}}+1)}$.
Thus, since $A^\star$ makes at most $M$ mistakes, its weight at the final round will be at least $\left(\frac{1}{2(k_{\text{out}}+1)(k_{\text{in}}+1)}\right)^M$.

Combining these two observations, we have
\[
\left(1-\frac{1}{4(k_{\text{out}}\!+\!1)(k_{\text{in}}\!+\!1)}\right)^N \geq  \left(\frac{1}{2(k_{\text{out}}\!+\!1)(k_{\text{in}}\!+\!1)}\right)^M,\]
which yields the mistake bound of \Cref{alg:reduction2online}: \[N\leq 4(k_{\text{out}}+1)(k_{\text{in}}+1)\ln(2(k_{\text{out}}+1)(k_{\text{in}}+1))M.\]
The proof is thus complete.
\end{proof}

\subsection{Proof of \Cref{thm:adv-lower}}
\label{app:proof-adv-lower-bound}
\begin{figure}[htbp]
    \centering
    \begin{tikzpicture}[>=stealth, node distance=1.5cm
    ]

    \node[draw, circle, fill=black, inner sep=1pt, label=above:$x_0$] (x0) at (0,0) {};
    \node[draw, circle, fill=black, inner sep=1pt, label=left:$x_1$] (x1) at (-4,-2) {};
    \node[draw, circle, fill=black, inner sep=1pt, label=left:$x_2$] (x2) at (0,-2) {};
    \node[draw, circle, fill=black, inner sep=1pt, label=right:$x_{k_1}$] (x3) at (5,-2) {};

    \node[draw, circle, fill=black, inner sep=1pt, label=below:$x_{11}$] (x11) at (-5.5,-4) {};
    \node[draw, circle, fill=black, inner sep=1pt, label=below:$x_{12}$] (x12) at (-4.5,-4) {};
    \node at (-3.5,-4) {$\cdots$};
    \node[draw, circle, fill=black, inner sep=1pt, label=below:$x_{1k_2}$] (x1k) at (-2.5,-4) {};

    \node[draw, circle, fill=black, inner sep=1pt, label=below:$x_{21}$] (x21) at (-1.5,-4) {};
    \node[draw, circle, fill=black, inner sep=1pt, label=below:$x_{22}$] (x22) at (-0.5,-4) {};
    \node at (0.5,-4) {$\cdots$};
    \node[draw, circle, fill=black, inner sep=1pt, label=below:$x_{2k_2}$] (x2k) at (1.5,-4) {};

    \node at (2.5,-3) {$\cdots$};
    
    \node[draw, circle, fill=black, inner sep=1pt, label=below:$x_{k_1,1}$] (x31) at (3.5,-4) {};
    \node[draw, circle, fill=black, inner sep=1pt, label=below:$x_{k_1, 2}$] (x32) at (4.5,-4) {};
    \node at (5.5,-4) {$\ldots$};
    \node[draw, circle, fill=black, inner sep=1pt, label=below:$x_{k_1, k_2}$] (x3k) at (6.5,-4) {};

    \draw[<->] (x0) -- (x1);
    \draw[<->] (x0) -- (x2);
    \draw[<->] (x0) -- (x3);

    \draw[->] (x1) -- (x11);
    \draw[->] (x1) -- (x12);
    \draw[->] (x1) -- (x1k);

    \draw[->] (x2) -- (x21);
    \draw[->] (x2) -- (x22);
    \draw[->] (x2) -- (x2k);

    \draw[->] (x3) -- (x31);
    \draw[->] (x3) -- (x32);
    \draw[->] (x3) -- (x3k);

\end{tikzpicture}
    \caption{Example for lower bound construction for the $\fullinfoadversarial$ setting.}
    \label{fig:lb}
\end{figure}
\thmlowerboundadv*
\begin{proof}[Proof of \Cref{thm:adv-lower}]
    Consider a graph with nodes $x_0, x_1,\ldots,x_{k_1}, x_{11},\ldots, x_{1k_2}, \ldots, x_{k_1,1},\ldots, x_{k_1 ,k_2}$ as shown in Fig~\ref{fig:lb} with $k_2>k_1$. The maximum out-degree is $k_2$ and the maximum in-degree is $k_1$. The hypothesis class contains all singletons over the leaves. Suppose the target hypothesis is $h_{i^\star, j^\star}$. Then only agents with original feature vectors at $x_{i^\star}$ and $x_{i^\star, j^\star}$ are labeled as positive.
    For any deterministic algorithm, at time $t$, there are the following cases:
    \begin{itemize}
        \item If $h_t$ is all-negative, the adversary can pick $x_t = x_{i^\star}$ and tie-breaking favors $x_0$. Then we only observe $v_t = x_0$, $y_t = 1$ and $\hat y_t = 0$. $h_t$ makes a mistake but learn nothing.
        \item If $h_t$ labels $x_0$ as positive, the adversary can pick $x_t = x_0$ and tie-breaking favors $x_0$. Then we  observe $v_t = x_0$, $y_t = 0$ and $\hat y_t = 1$. Again, $h_t$ makes a mistake but learn nothing.
        \item If $h_t$ labels any node $x_i$ in the second layer as positive, the adversary can pick $x_t = x_0$ and tie-breaking favors $x_i$. Then we  observe $v_t = x_i$, $y_t = 0$ and $\hat y_t = 1$. Again, $h_t$ makes a mistake but we learn nothing.
        \item Hence, $h_t$ can only label some leaves as positive. When it labels any leaf $x_{ij}$ other than the target one as positive, the adversary picks $x_t = x_{ij}$ being that leaf and tie-breaking favors $x_{ij}$. $h_t$ makes a mistake and learns that $x_{ij}$ is not the target leaf. 
    \end{itemize}
    Therefore, the algorithm will make $k_1k_2$ mistakes. The Littlestone dimension of this hypothesis class is $1$. We can extend the result to Littlestone dimension being $d$ by making $d$ independent copies of this example.
\end{proof}

\section{Omitted Proofs from Section~\ref{sec:weighted}}

\subsection{Proof of \Cref{lemma:mistake-alg-red2approx}}
\label{app:mistake-alg-red2approx}

\lemmareduction*
\begin{proof}[Proof of \Cref{lemma:mistake-alg-red2approx}]
    Let $\cA'$ be the algorithm obtained by applying the reduction in \Cref{alg:reduction2approxbr} to $\cA$.
    Let $(t_i)_{i\ge1}$ be the time steps in which line~\ref{algline:update-A} of \Cref{alg:reduction2approxbr} is called.
    The key to establishing this lemma is to show that, for any $t=t_i$, the agent's response will best respond to the current $h_t$ with arbitrary tie-breaking by manipulating to $v_t=\BR_{\discountedh}(x_t)$.  

    To see this, note that the condition in Line~\ref{algline:update-A-condition} ensures that $\cA'$ has made at least $\Phi=\lceil\ln(\frac{1}{3})/\ln(\gamma)\rceil+1$ mistakes in the time interval $[t_{i-1}+1,t_i]$, which implies that $t_i-(t_{i-1}+1)\ge\Phi-1$. Moreover, during this time interval, $\cA'$ has been using the same classifier $h_{i-1}^{\cA}$. Therefore, at time $t=t_i$, the $\gamma$-discounted average classifier can be written as
    \begin{align*}
        \discountedh =& \frac{1-\gamma}{1-\gamma^{t-1}}\cdot
        \sum_{\tau=1}^{t-1} \gamma^{\tau-1} h_{t-\tau}\\
        =& \frac{1-\gamma}{1-\gamma^{t-1}}\cdot
        \sum_{\tau=1}^{\Phi-1} \gamma^{\tau-1} h_{t}+\frac{1-\gamma}{1-\gamma^{t-1}}\cdot
        \sum_{\tau=\Phi}^{t-1} \gamma^{\tau-1} h_{t-\tau}
        \tag{$h_{t-\tau}=h_t$ for $\tau\le\Phi-1$}\\
        =& \frac{1-\gamma^{\Phi-1}}{1-\gamma^{t-1}}\cdot h_t + \frac{\gamma^{\Phi-1}-\gamma^{t-1}}{1-\gamma^{t-1}}\cdot \tilde{h}
        \tag{$\tilde{h}:=\frac{1-\gamma}{\gamma^{\Phi-1}-\gamma^{t-1}}\sum_{\tau=\Phi}^{t-1}\gamma^{\tau-1} h_{t-\tau}$}\\
        =& (1-\eps_t) \cdot h_t+\eps_t \cdot \tilde{h}.\tag{$\eps_t:=\frac{\gamma^{\Phi-1}-\gamma^{t-1}}{1-\gamma^{t-1}}$}
    \end{align*}
    In the above equation, we have
    \begin{align*}
        \eps_t=\frac{\gamma^{\Phi-1}-\gamma^{t-1}}{1-\gamma^{t-1}}
        \le \gamma^{\Phi-1}\le \gamma^{\ln(\frac{1}{3})/\ln(\gamma)}
        = \frac{1}{3}.
    \end{align*}
    This implies that $\discountedh$ is already very close to the true $h_t$. If $h_t(\BR_{h_t}(x_t))=1$, then we claim $\BR_{\discountedh}(x_t) \subset \BR_{h_t}(x_t)$ because for any $x_t$'s neighbor $v \notin \BR_{h_t}(x_t)$, it holds that 
    \begin{align*}
        \discountedh(v) &=(1-\epsilon_t) \cdot h_t(v) + \epsilon_t \cdot \tilde{h}(v)=\epsilon_t \cdot \tilde{h}(v) \\&\leq \epsilon_t < 1-\epsilon_t
        \leq \max_{v' \in N_{\text{out}}[x_t]} \discountedh(v').
    \end{align*}
    If $h_t(\BR_{h_t}(x_t))=0$, then $\BR_{\discountedh}(x_t)$ being some $v \in N_{\text{out}}[x_t]$ can be viewed as breaking tie adversarially for best responding to $h_t$. 
    
    Per guarantee of algorithm $\cA$, we conclude that the number of mistakes that $\cA'$ makes on the subsequence $S'=(x_{t_i},y_{t_i})_{i\ge1}$ is at most $\mistake_{\fullinfoadversarial}(\cA,\cH)$. 

    Finally, it remains to bound the mistakes on the entire sequence $S$. Since $\cA$ calls $\cA'$ once per $\Phi$ mistakes, the total number of mistakes is at most
    \begin{align*}
        \mistake_{\noinfo}(\cA',\cH)&\le \Phi\cdot \mistake_{\fullinfoadversarial}(\cA,\cH)\\
        &=\left(\lceil\ln(1/3)/\ln(\gamma)\rceil+1\right)\cdot \mistake_{\fullinfoadversarial}(\cA,\cH).
    \end{align*}
    The proof is thus complete.
\end{proof}

\subsection{Proof of \Cref{lemma:discounted-upper-2}}
\label{app:discounted-upper-2}
\lemmaupperboundH*
\begin{proof}[Proof of \Cref{lemma:discounted-upper-2}]
    First we show that $\hstar$ is always in $E$. For any $y_t=0$, $\hstar(v_t)=0$ because $\max_{v \in \Nout[x_t]} \hstar(v)=0$. Then $h^\star$ will never be removed from $E$. %
    
    Next we show the number of false negative error is no larger than the number of false positive error. Suppose we make a false negative mistake at time $t$, i.e., $y_t=1$ and $h_t(v_t)=0$. For $v \in \BR_{\hstar}(x_t)$, $h_1(v)=\cdots =h_{t-1}(v)=1$ because $\hstar(v)=1$ and $\hstar \in E$ at time $1, \cdots, t-1$. Then we have $\discountedh(v)=1$ by the definition of $\discountedh$. Moreover it holds that $\discountedh(v_t) \geq \discountedh(\BR_{\hstar}(v))=1$. So $h_1(v_t), \cdots, h_{t-1}(v_t)$ must all be positive. $h_t(v_t)=0$ means that $E$ is updated at time $t-1$ otherwise $h_{t-1}$ will be the same as $h_t$. So a false negative mistake can only occur right after a false positive mistake. 

Every time we make a false positive mistake, at least one hypothesis will be removed from $E$ otherwise $v_t$ won't be classified as positive. So the number of false positive mistakes is upper bounded by the number of candidate hypothesis $|\cH|$ and the number of total mistakes is upper bounded by $2|\cH|$. 
    
\end{proof}
\subsection{Proof of Theorem~\ref{thm:discounted-lower}}
\label{app:proof-discounted-lower}
\thmlowerbounddiscounted*
\begin{proof}[Proof of \Cref{thm:discounted-lower}]
    We prove this theorem by establishing the two lower bounds separately.
    First, adapting the construction in \Cref{thm:adv-lower}, we show that as $\gamma \to 0$, the freedom of arbitrary tie-breaking forces the decision-maker to suffer
    $ d\cdot\kin\cdot\kout$ mistakes.
    Next, we use a different instance that exploits the agent’s effective memory length to show that $\mistake_{\fullinfoadversarial}=\Omega\left(\min\!\left\{d(1-\gamma)^{-1},|\cH|\right\}\right)$.
    Combining these bounds completes the proof.
    
    \paragraph{Part 1: $\mistake_{\noinfo}=\Omega(d\cdot\kin\cdot\kout)$ as $\gamma\to0$.}

    \begin{figure}[htbp]
        \centering
        \begin{tikzpicture}[>=stealth]
    
        \node[draw, circle, fill=black, inner sep=1pt, label=above:$x_0$] (x0) at (0,0) {};
    
        \node[draw, circle, fill=black, inner sep=1pt, label=left:$x_1$] (x1) at (-4,-1.7) {};
        \node[draw, circle, fill=black, inner sep=1pt, label=left:$x_2$] (x2) at (-1.5,-2.5) {};
        \node (xk1m1) [draw=none] at (1.5,-2.5) {$\cdots$};
        \node[draw, circle, fill=black, inner sep=1pt, label=right:$x_{k_1}$] (xk1) at (4,-1.7) {};
    
        \draw[<->] (x0) -- (x1);
        \draw[<->] (x0) -- (x2);
        \draw[<->] (x0) -- (xk1m1);
        \draw[<->] (x0) -- (xk1);
    
        \draw[<->] (x1) -- (x2);
        \draw[<->] (x1) -- (xk1m1);
        \draw[<->] (x1) -- (xk1);
        \draw[<->] (x2) -- (xk1m1);
        \draw[<->] (x2) -- (xk1);
        \draw[<->] (xk1m1) -- (xk1);
    
        \node[draw, circle, fill=black, inner sep=1pt, label=below:$x_{1,1}$] (x11) at (-6,-4.5) {};
        \node at (-4.2,-4.5) {$\cdots$};
        \node[draw, circle, fill=black, inner sep=1pt, label=below:$x_{1,2}$] (x12) at (-5,-4.5) {};
        \node[draw, circle, fill=black, inner sep=1pt, label=below:$x_{1,k_2}$] (x1k) at (-3.5,-4.5) {};
        \draw[->] (x1) -- (x11);
        \draw[->] (x1) -- (x12);
        \draw[->] (x1) -- (x1k);
    
        \node[draw, circle, fill=black, inner sep=1pt, label=below:$x_{2,1}$] (x21) at (-2.5,-4.5) {};
        \node[draw, circle, fill=black, inner sep=1pt, label=below:$x_{2,2}$] (x22) at (-1.8,-4.5) {};
        \node at (-1.1,-4.5) {$\cdots$};
        \node[draw, circle, fill=black, inner sep=1pt, label=below:$x_{2,k_2}$] (x2k) at (-0.5,-4.5) {};
        \draw[->] (x2) -- (x21);
        \draw[->] (x2) -- (x22);
        \draw[->] (x2) -- (x2k);
    
        \node[draw=none] (xm1) at (1.5,-4.5) {};
        \node at (2.2,-4.4) {$\cdots$};
        \node[draw=none] (xm2) at (2.8,-4.5) {};
        \node[draw=none] (xm3) at (0.8,-4.5) {};
        \draw[->, dashed] (xk1m1) -- (xm1);
        \draw[->, dashed] (xk1m1) -- (xm2);
        \draw[->, dashed] (xk1m1) -- (xm3);
    
        \node[draw, circle, fill=black, inner sep=1pt, label=below:$x_{k_1,1}$] (x31) at (3.5,-4.5) {};
        \node[draw, circle, fill=black, inner sep=1pt, label=below:$x_{k_1,2}$] (x32) at (4.5,-4.5) {};
        \node at (5.2,-4.5) {$\cdots$};
        \node[draw, circle, fill=black, inner sep=1pt, label=below:$x_{k_1,k_2}$] (x3k) at (5.8,-4.5) {};
        \draw[->] (xk1) -- (x31);
        \draw[->] (xk1) -- (x32);
        \draw[->] (xk1) -- (x3k);
    
        \end{tikzpicture}
        \caption{Lower bound construction when $\gamma\to0$}
        \label{fig:lb-gamma-part1}
    \end{figure}

    For this lower bound, consider the graph that is the same as that in \Cref{fig:lb}, but all nodes in the first layer (namely $x_1,x_2,\ldots,x_{k_1}$) are connected by a clique, as shown in \Cref{fig:lb-gamma-part1}. This modification gives $\kout=k_1+k_2$ and $\kin=k_2$. As in \Cref{thm:adv-lower}, we let the hypothesis class consist of all singletons over the leaves, i.e., $\cH=\{h_{i,j}=\1{x_{i,j}}\mid i\in[k_1],j\in[k_2]\}$. Suppose the target hypothesis is some unknown $h_{i^\star, j^\star}\in\cH$. Under the strategic realizability assumption, only agents originally located at $x_{i^\star}$ or $x_{i^\star, j^\star}$ have positive true labels. 
    
    For $\gamma\to0$, best responding to $\discountedh$ is equivalent to best responding to $h_{t-1}$. We will construct an adversary that induces at least one mistake in every two rounds in the first $2k_1k_2$ rounds. This will establish a lower bound of $\Omega(\kin\cdot(\kout-\kin))=\Omega(\kin\cdot\kout)$ under the assumption that $\kout\ge (1+\Omega(1))\cdot\kin$.
    
    Consider two consecutive rounds $t-1$ and $t$ (where $t=2,4,\ldots, 2k_1k_2$). If the learner already makes a mistake at round $t-1$, then a mistake have been induced across these two rounds. Otherwise, we show below how to force a mistake in round $t$ by performing a case discussion of both $h_{t-1}$ --- which the agent $x_t$ best responds to --- and the current classifier $h_t$. We let the initial version space contain all hypotheses.
    \begin{enumerate}
        \item If $h_{t-1}$ labels $x_0$ by positive, then consider two sub-cases. If $h_t(x_0)=1$, the adversary can pick $(x_t,y_t)=(x_0,0)$, and make the tie-breaking favor $x_0$. This forces $h_t$ to make a false positive mistake but learns nothing. 
        
        On the other hand, if $h_t(x_0)=0$, then the adversary picks some $(i^\star,j^\star)$\footnote{Since the learner's algorithm is deterministic, the adversary can simulate the interaction between the algorithm and the adversary, and pick the hypothesis $h_{i^\star,j^\star}$ that is the last one to be removed from the version space. This makes the realizable hypothesis consistent across all rounds.} consistent with the current version space, sets $(x_t,y_t)=(x_{i^\star},1)$, and again makes the tie-breaking favor $x_0$. As a result, the $t$-th agent will manipulate to $x_0$ and induces a false negative mistake while revealing no information. 
        \item If $h_t$ labels any leaf node $x_{i,j}$ as positive, then the adversary picks $(x_t,y_t)=(x_{i,j},0)$, and removes hypothesis $h_{i,j}$ from the version space if it has not been removed yet.  This induces a false positive mistake. Since the learner made no mistake in round $t-1$, we can assume that $h_{t-1}$ labels all leaf nodes as negative. 
        \item We are left with the case where $h_{t-1}(x_0)=-1$ and $h_t$ labels all leaf nodes as negative.
        In addition, from case 2, it is without loss of generality to assume that $h_{t-1}(x_{i,j})=-1$ for all $i\in[k_1]$ and $j\in[k_2]$, as otherwise the learner could have induced a false positive mistake in round $t-1$ already. 
        Therefore, only the middle layer $\{x_1,\ldots,x_{k_1}\}$ can be potentially labeled as positive by $h_{t-1}$. Due to arbitrary tie-breaking and the added clique, we can assume that all nodes in $\{x_0,x_1,\ldots,x_{k_1}\}$ manipulates to the same node $x_i$ where $i\in\argmax_{i\in[k_1]}h_{t-1}(x_i)$ in response to $h_{t-1}$. Now consider the following two subcases:
        
        If $h_t(x_i)=1$, then the adversary can choose $(x_t,y_t)=(x_0,0)$, which will manipulate to $x_i$ and induce a false positive mistake. If $h_t(x_i)=0$, then the adversary chooses $(x_t,y_t)=(x_{i^\star},1)$ from the version space\footnote{Again, assume that $h_{i^\star,j^\star}$ is the last hypothesis that remains in the version space.} and induces a false negative mistake, while revealing no information about $i^{\star}$. 

        Note that the proof of this case crucially relies on the additional clique structure. Without this clique, the learner can set $h_{t-1}$ to predict all middle-layer nodes ($x_1,\ldots,x_{k_1}$) as positive, and set $h_t$ to predict all nodes as negative. This would cause all leaf nodes $x_{i,j}$ to manipulate to their parent $x_i$, while the middle-layer nodes stay put. In this case, for the adversary to induce a mistake, it would have to reveal information about $i^\star$, which would allow the learner to either eliminate up to $k_2$ hypotheses in a single round.
    \end{enumerate}
    Since the version space initially contains $k_1\cdot k_2$ classifiers, the above construction can be used to induce at least $k_1\cdot k_2-1$ mistakes before the version space becomes empty. This establishes the $\Omega(k_1 \cdot k_2)$ mistake bound when $d=1$. For general $d$, we can create $d$ independent copies of the same instance to obtain a bound of $\Omega(k_1 \cdot k_2 \cdot d)$.

    \paragraph{Part 2: For any $\gamma\in(0,1)$, $\mistake_{\noinfo}=\Omega\left(\min\!\left\{d(1-\gamma)^{-1},|\cH|\right\}\right)$.}

For simplicity, we consider the agents best respond to unnormalized weighted sum of history 
\[\unnormalizeddiscountedh = \sum_{\tau=0}^{t-2} \gamma^{\tau} h_{t-1-\tau}\]
in this proof because the agents will make the same manipulation for $\discountedh$ and $\unnormalizeddiscountedh$. We first provide an example for hypothesis class with Littlestone dimension being $1$. Consider the graph with $x_{1,B}, x_{1,L}, x_{1,R}, \dots, x_{H,B}, x_{H,L}, x_{H,R}$ as shown in Fig~\ref{fig:weighted_lb}. There are $H$ classifier in the hypothesis class $\mathcal{H}$ with $h_i$ defined as $h_i(x_{j,B})=0$ for all $j$; $h_i(x_{j,L})=1, h_i(x_{j,R})=0$ for any $j \neq i$; and $h_i(x_{i,L})=0$, $h_i(x_{i,R})=1$. Suppose the target hypothesis is $h_{i^\star}$. 

At each round $t$, there are three cases regarding the weighted sum of historical classifiers $\unnormalizeddiscountedh$:
\begin{enumerate}

    \item If there exists $i \neq i^\star$ such that $\unnormalizeddiscountedh(x_{i,B})=\max\{\unnormalizeddiscountedh(x_{i,B}), \unnormalizeddiscountedh(x_{i,L}), \unnormalizeddiscountedh(x_{i,R})\}$, we consider the following two cases of $h_t$. 
    \begin{enumerate}
        \item If $h_t$ labels $x_{i,B}$ as negative, the adversary can pick $x_t=x_{i,B}$. The agent will always stay at $x_{i,B}$ under standard tie-breaking rule. We will observe $v_t = x_{i,B}$, $y_t=1$ and $\hat{y}_t=0$. $h_t$ makes a mistake but learns nothing. 
        \item If $h_t$ labels $x_{i,B}$ as positive and $\unnormalizeddiscountedh(x_{i,B}) > \unnormalizeddiscountedh(x_{i,R})$, the adversary can pick $x_t=x_{i,R}$. The agent will manipulate to $x_{i,B}$. $h_t$ makes a mistake but learns nothing. 
        \item If $\unnormalizeddiscountedh(x_{i,B}) = \unnormalizeddiscountedh(x_{i,R}) \geq \unnormalizeddiscountedh(x_{i,L})$ and $h_t$ labels both $x_{i,B}$ and $x_{i,R}$ as positive, the adversary can pick $x_t = x_{i,R}$. The agent will stay at $x_{i,R}$. $h_t$ makes a mistake and we can eliminate $h_i$. 
        \item If $\unnormalizeddiscountedh(x_{i,B}) = \unnormalizeddiscountedh(x_{i,R}) \geq \unnormalizeddiscountedh(x_{i,L})$ and $h_t$ labels $x_{i,B}$ as positive and $x_{i,R}$ as negative, we can't force a mistake at this round. However, $\bar{h}_{t+1}^\gamma(x_{i,B})$ will be strictly larger than $\bar{h}_{t+1}^\gamma(x_{i,R})$ because $h_t(x_{i,B}) > h_t(x_{i,R})$. Then we force a mistake in round $t+1$ without revealing any information like we did in (a) and (b) above. 
    \end{enumerate}
    \item If there exists $i \neq i^\star$ such that $\unnormalizeddiscountedh(x_{i,R})=\max\{\unnormalizeddiscountedh(x_{i,B}), \unnormalizeddiscountedh(x_{i,L}), \unnormalizeddiscountedh(x_{i,R})\}$ but $\unnormalizeddiscountedh(x_{i,B}) < \unnormalizeddiscountedh(x_{i,R})$, we consider the following two cases of $h_t$. 
    \begin{enumerate}
        \item If $h_t$ labels $x_{i,R}$ as positive, the adversary can pick $x_t=x_{i,R}$. The agent will stay at $x_{i,R}$. We will observe $y_t=0$ and $\hat{y}_t=1$. $h_t$ makes a mistake and we can eliminate $h_i$. 
        \item If $h_t$ labels $x_{i,R}$ as negative, the adversary can pick $x_t = x_{i,B}$. With probability over $1/2$, the agent will manipulate to $x_{i,R}$. We will observe $y_t=1$ and $\hat{y}_t=0$. $h_t$ makes a mistake but we can't make any progress. 
    \end{enumerate}
    \item If $\unnormalizeddiscountedh(x_{i,L}) > \unnormalizeddiscountedh(x_{i,B})$ and $\unnormalizeddiscountedh(x_{i,L}) > \unnormalizeddiscountedh(x_{i,R})$ for all $i \neq i^*$, we consider the following cases of $h_t$.
    \begin{enumerate}
        \item If $h_t$ labels $x_{i,L}$ as negative, the adversary can pick $x_t = x_{i,B}$. The agent will manipulate to $x_{i,L}$. $h_t$ makes a mistake but learns nothing. 
        \item If $h_t$ labels $x_{i,L}, x_{i,B}, x_{i,R}$ all as positive, the adversary can pick $x_t = x_{i,R}$. If the agent stays at $x_{i,R}$, we make a mistake and can eliminate $h_i$. If the agent manipulates to $x_{i,B}$, we make a mistake but can't learn anything. 
        \item If $h_t$ labels $x_{i,L}$ and $x_{i,R}$ as positive and labels $x_{i,B}$ as negative, the adversary can pick $x_t = x_{i,B}$. The agent will manipulate to $x_{i,L}$. We don't make a mistake but we also learn nothing. The only benefit in this round is that $\gamma$-weighted-sum will receive more weight for $x_{i,R}$ than $x_{i,B}$ from $h_t$. 
        \item If $h_t$ labels $x_{i,L}$ as positive and $x_{i,R}$ as negative, the adversary will pick $x_t = x_{i,B}$. Again we don't make mistakes and learn nothing. But $\gamma$-weighted-sum will receive more weight for $x_{i,L}$ than $x_{i,R}$ from $h_t$. 
    \end{enumerate}
\end{enumerate}
Start from some time $T$. We hope to achieve one of the following goals at time $T' >T$ where $T'$ is some number we will choose later:
\begin{itemize}
    \item There exists a feasible $i$ such that $\bar{h}_{T'}(\gamma) (x_{i,L}) > \bar{h}_{T'}(\gamma) (x_{i,R}) + \frac{1}{3(1- \gamma)}$. 
    \item There exists a feasible $i$ such that $\bar{h}_{T'}(\gamma) (x_{i,R}) > \bar{h}_{T'}(\gamma) (x_{i,B})$. 
    \item We make at least one mistake and the hypotheses we can eliminate is no more than the mistakes we make. 
\end{itemize}
The third case will happen once one of 1(a-d), 2(a), 2(b), 3(a), 3(b) happens. So we would assume only 3(c) and 3(d) occur between time $T$ and $T'$.

For any feasible $i$ at time $T$, we define $A_i = \{T \leq t < T'| h_t(x_{i,L})=1, h_t(x_{i,R})=1, h_t(x_{i,B})=0\}$, $B_i = \{T \leq t < T'| h_t(x_{i,L})=1, h_t(x_{i,R})=0, h_t(x_{i,B})=1\}$ and $C_i = \{T \leq t < T'| h_t(x_{i,L})=1, h_t(x_{i,R})=0, h_t(x_{i,B})=0\}$. We further define $\Delta_A = \sum_{t \in A_i} \gamma^{T'-t}$, $\Delta_B = \sum_{t \in B_i} \gamma^{T'-t}$ and $\Delta_C = \sum_{t \in C_i} \gamma^{T'-t}$. Then we can have that 
\begin{align*}
    \bar{h}_{T'}(\gamma)(x_{i,L}) &= \gamma^{T'-T}\bar{h}_{T}(\gamma)(x_{i,L}) + \Delta_A + \Delta_B + \Delta_C \\&= \gamma^{T'-T}\bar{h}_{T}(\gamma)(x_{i,L}) + \frac{1-\gamma^{T'-T}}{1-\gamma}, \\
    \bar{h}_{T'}(\gamma)(x_{i,R}) &= \gamma^{T'-T}\bar{h}_{T}(\gamma)(x_{i,R}) + \Delta_A , \\
    \bar{h}_{T'}(\gamma)(x_{i,B}) &= \gamma^{T'-T}\bar{h}_{T}(\gamma)(x_{i,B}) + \Delta_B, \\
\end{align*}
If neither of the two goals is achieved for $i$, then it must hold that 
\begin{align*}
\frac{1}{3(1- \gamma)} &\geq \bar{h}_{T'}(\gamma) (x_{i,L}) - \bar{h}_{T'}(\gamma) (x_{i,R})\\ &=  \gamma^{T'-T}\left( \bar{h}_{T}(\gamma)(x_{i,L}) - \bar{h}_{T}(\gamma)(x_{i,R}) \right)+\Delta_B +\Delta_C\\
&\geq \Delta_B + \Delta_C
\end{align*}
and 
\begin{align*}
    \Delta_A &\leq \Delta_B + \gamma^{T'-T}\left( \bar{h}_{T}(\gamma)(x_{i,B}) - \bar{h}_{T}(\gamma)(x_{i,R}) \right)\\
    &\leq \frac{1}{3(1- \gamma)} - \Delta_C + \gamma^{T'-T}\left( \bar{h}_{T}(\gamma)(x_{i,B}) - \bar{h}_{T}(\gamma)(x_{i,L}) \right)\\
    &\leq \frac{1}{3(1- \gamma)} - \Delta_C \leq \frac{1}{3(1- \gamma)}.
\end{align*}
Combining the two inequalities above, we have that 
\begin{align*}
   \frac{1-\gamma^{T'-T}}{1-\gamma} = \Delta_A+\Delta_B +\Delta_C \leq \frac{2}{3(1- \gamma)}.
\end{align*}
This requires $\gamma^{T'-T} \geq \frac{1}{3}$, which won't be true for $T' > T +\frac{\ln{3}}{\ln(1/\gamma)}$. So at least one of the first goal is achieved at time $T'$. Next we claim we can achieve one of the following goals at time $T'' \geq T'$
\begin{itemize}
    \item $\bar{h}_{T''}(\gamma) (x_{i,L}) > \bar{h}_{T''}(\gamma) (x_{i,R}) + \frac{1}{3(1- \gamma)}$.
    \item $\bar{h}_{T''}(\gamma) (x_{i,R}) > \bar{h}_{T''}(\gamma) (x_{i,B})$ and $h_{T''}(x_{i,R})=1$. 
    \item We make at least one mistake and the hypotheses we can eliminate is no more than the mistakes we make. 
\end{itemize}
From the derivation above, we can assume the third case never happens after $T$. If the first case also never happens after $T'$, we know it must hold that $\bar{h}_{t}(\gamma) (x_{i,R}) > \bar{h}_{t}(\gamma) (x_{i,B})$ for any $t \geq T'$. So if there exists $t \geq T'$ such that $h_t(x_{i,R})=1$, we can finish proving the claim. On the other hand, if $h_t(x_{i,R})=0$ for all the $t \geq T'$, for $T'' = T' +\frac{\ln{1.5}}{\ln(1/\gamma)}$, we have that 
\begin{align*}
    \bar{h}_{T''}(\gamma) (x_{i,L}) - \bar{h}_{T''}(\gamma) (x_{i,R}) &\geq \sum_{t=T'}^{T''-1} \gamma^{T''-1-t}\left(h_t(x_{i,L}) - h_t(x_{i,R}) \right)\\
    &=\frac{1-\gamma^{T''-T'}}{1-\gamma}\\
    &\geq \frac{1}{3(1-\gamma)}. 
\end{align*}
This is because $\bar{h}_{T'}(\gamma) (x_{i,L})\geq\bar{h}_{T'}(\gamma) (x_{i,R})$ and $h_t(x_{i,L})=1$ for all the $t \geq T$. 
It means the first case will happen at $T''$ which causes contradiction. 

If the second case happens, we can choose $x_{T''} = x_{i,R}$. We will make a mistake and eliminate $h_i$. 

If the first case happens, we will choose $h_i$ as the target classifier. For $t=T'', \dots, T''+\frac{\ln(4/3)}{\ln(1/\gamma)}-1$, it will always hold that $\unnormalizeddiscountedh(x_{i,L}) > \unnormalizeddiscountedh(x_{i,R})$ because 
\begin{align*}
 &\unnormalizeddiscountedh(x_{i,L}) - \unnormalizeddiscountedh(x_{i,R}) \\
 &\geq \gamma^{t-T''} \left(\bar{h}_{T''}(\gamma)(x_{i,L}) - \bar{h}_{T''}(\gamma)(x_{i,R}) \right) -\sum_{s=T''}^{t-1} \gamma^{t-1-s} \\
 &\geq \gamma^{t-T''} \frac{1}{3(1-\gamma)} - \frac{1-\gamma^{t-T''}}{1-\gamma}\\
 &\geq \frac{4\gamma^{t-T''}-3}{3(1-\gamma)} > 0. 
\end{align*}
We will discuss all the cases during $t=T'', \dots, T''+\frac{\ln(4/3)}{\ln(1/\gamma)}-1$ in which we can force mistakes in at least half of the rounds. 
\begin{itemize}
    \item If $\unnormalizeddiscountedh (x_{i,B}) > \unnormalizeddiscountedh (x_{i,L})$, we consider the following two cases of $h_t$.
    \begin{itemize}
        \item If $h_t(x_{i,B})=1$, the adversary will choose $x_t = x_{i,L}$. The agent will manipulate to $x_{i,B}$. $h_t$ will make a mistake but learn nothing. 
        \item If $h_t(x_{i,B})=0$, the adversary will choose $x_t = x_{i,R}$. The agent will manipulate to $x_{i,B}$. $h_t$ will make a mistake but learn nothing. 
    \end{itemize}
    \item If $\unnormalizeddiscountedh (x_{i,B}) = \unnormalizeddiscountedh (x_{i,L}) > \unnormalizeddiscountedh (x_{i,R})$, we consider the following four cases of $h_t$.
    \begin{itemize}
        \item If $h_t(x_{i,L})=1$, the adversary will choose $x_t = x_{i,L}$. The agent will stay at $x_{i,L}$. $h_t$ will make a mistake. 
        \item If $h_t(x_{i,B})=0$, the adversary will choose $x_t = x_{i,R}$. The agent will manipulate to $x_{i,B}$. $h_t$ will make a mistake but learn nothing. 
        \item If $h_t(x_{i,L})=0$ and $h_t(x_{i,B})=1$, we won't force a mistake in this round. But $\bar{h}^\gamma_{t+1}(x_{i,B})$ and $\bar{h}^\gamma_{t+1}(x_{i,L})$ won't be the same and we can definitely force a mistake in round $t+1$ as discussed above. 
    \end{itemize}
\end{itemize}
Based on the analysis above, the learner will either make at least $\lceil\frac{\ln(4/3)}{2\ln(1/\gamma)}\rceil$ mistakes to know the target functions or make at least $|\mathcal{H}|-1$ mistakes to eliminate all the alternative functions. It is easy to show that $\lceil\frac{\ln(4/3)}{2\ln(1/\gamma)}\rceil \geq \frac{\ln(4/3)}{4}(1-\gamma)^{-1}$ for any $0 <\gamma<1$. So the mistake bound is $\Omega(\min\{(1-\gamma)^{-1}, |\mathcal{H}|\})$. For Littlestone dimension $d$, we can create $d$ independent copies of the above instance. The mistake bound will be $\Omega(\min\{d(1-\gamma)^{-1}, |\mathcal{H}|\})$. 

\end{proof}

\section{Omitted Details and Proofs from Section~\ref{sec:learning-agents}}
\label{app:mean-based}
\begin{observation}[Formal version of \Cref{obs:informal}]
    There exists a hypothesis class $\cH= \{h_1,h_2\}$ of size $2$ and a manipulation graph $G$ with $k_{\text{in}},k_{\text{out}}\leq 2$ such that for any mean-based learner $\cA$, the decision maker will suffer at least $\sum_{t =\frac{T}{2} + 1}^{T} \min\{\sigma_{t}^\cA(\frac{t-1-T/2}{t-1}), c_t\}$ number of mistakes, where $c_t=\Omega(1)$ is the probability of choosing empirical best neighbor.
\end{observation}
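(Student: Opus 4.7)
I would use a two-node manipulation graph with $\cX = \{x_1, x_2\}$ and a single directed edge $x_1 \to x_2$ plus self-loops, so that $k_\text{in}, k_\text{out} \le 2$. Let $\cH = \{h_1, h_2\}$ with $h_i = \1{\{x = x_i\}}$. The key structural property is that the agent $(x_1, 1)$ is strategic realizable under \emph{both} $h_1$ and $h_2$: under $h_1$ the best response stays at $x_1$ (labeled $1$), while under $h_2$ it moves to $x_2$ (also labeled $1$). Thus the adversary can keep the identity of $h^{\star}$ hidden throughout Phase 1 by repeatedly sending $(x_1, 1)$.

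\textbf{Adaptive adversary.} For rounds $t = 1, \dots, T/2$ the adversary sends $(x_1, 1)$. Let $N_i := \sum_{\tau=1}^{T/2} h_\tau(x_i)$ summarize the decision-maker's Phase 1 contributions to the agent's empirical average at node $x_i$. At round $T/2 + 1$, the adversary adaptively commits to $h^{\star}$ based on $(N_1, N_2)$: if the decision-maker kept the history balanced (say by playing a classifier labeling both $x_1$ and $x_2$ as positive, so $N_1 \approx N_2 \approx T/2$), the adversary sets $h^{\star} = h_1$ and sends $(x_2, 0)$, forcing the decision-maker to switch to a classifier with $h(x_2) = 0$ and $h(x_1) = 1$, i.e., $h_1$; if instead the decision-maker biased Phase 1 toward one hypothesis (say $N_1 = T/2, N_2 = 0$ from always using $h_1$), the adversary sets $h^{\star} = h_2$ and sends $(x_2, 1)$ to force a switch to $h_2$.

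\textbf{Mistake analysis in the two extreme cases.} In the balanced case, after the switch to $h_1$, an easy computation gives $\bar h_t(x_1) \approx 1$ and $\bar h_t(x_2) \approx (T/2)/(t-1)$, so the empirical gap in favor of the best neighbor $x_1$ is at most $(t-1-T/2)/(t-1)$. By the mean-based property, the agent slips to the second-best neighbor $x_2$ with probability at least $\sigma_t^\cA((t-1-T/2)/(t-1))$, and each such slip is misclassified by $h_1$ as a false negative. In the biased case, after the forced switch to $h_2$, the quantity $\bar h_t(x_1) = (T/2)/(t-1)$ still exceeds $\bar h_t(x_2) = (t-1-T/2)/(t-1)$ for every $t \le T$, so $x_1$ remains the empirical best throughout Phase 2 and is labeled $0$ by $h_2$; an agent picking the empirical best (probability at least $c_t$) triggers a false negative. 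The decision-maker's best response across these two regimes thus incurs per-round mistake rate at least $\min\{\sigma_t^\cA((t-1-T/2)/(t-1)), c_t\}$.

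\textbf{Main obstacle.} The sharpest technical step will be handling decision-maker strategies with intermediate imbalance $\Delta := |N_1 - N_2|$, where the two regimes coexist. I plan to address this by a case split on the sign of $N_1 - N_2$ combined with the adversary's opposite commitment, which ensures that the crossover round at which the empirical best flips occurs at $t^{\star} = T/2 + \Delta + 1$. Rounds in $[T/2+1, t^{\star}]$ fall into the $c_t$ regime, while rounds in $(t^{\star}, T]$ fall into the $\sigma_t^\cA$ regime whose argument is bounded by $(t-1-T/2)/(t-1)$. Summing over both regimes produces the claimed $\sum_{t=T/2+1}^{T} \min\{\sigma_t^\cA((t-1-T/2)/(t-1)), c_t\}$ lower bound uniformly in $\Delta$. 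A minor additional subtlety is accounting for the forcing round at $T/2+1$ and for the self-loop at $x_1$ as a third option in the agent's mean-based selection; both are absorbed into the Phase 2 sum since $c_t = \Omega(1)$ and the self-loop value $\bar h_t(x_1)$ is already the empirical best.
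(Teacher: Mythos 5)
There is a genuine gap in your construction, and it is structural rather than a matter of detail. Consider the branch of your adaptive adversary in which it must commit to $\hstar=h_2=\1{x=x_2}$ (the branch you invoke when Phase 1 is biased toward $x_1$). Under $h_2$, \emph{every} node in your two-node graph can reach $x_2$: the node $x_1$ via the edge $x_1\to x_2$, and $x_2$ via its self-loop. Hence $\hstar(\BR_{\hstar}(x))=1$ for all $x$, and strategic realizability forces every agent the adversary may send in this branch to have true label $1$. The learner can then play the all-positive classifier in every round and make zero mistakes, no matter where the mean-based agent lands. Your claim that sending $(x_2,1)$ ``forces a switch to $h_2$'' does not hold: nothing forces the learner to set $h_t(x_1)=0$, and the false negatives you count in the biased case never occur. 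The same objection kills any two-node fix (e.g., adding the reverse edge): whenever the positive node of $\hstar$ is reachable from every node, there are no realizable negative examples, and all-positive is mistake-free. Note also that the adversary cannot simply always pick $\hstar=h_1$, since a learner that plays $h_1$ from round one drives the empirical gap in favor of $x_1$ to nearly $1$, and $\sigma_t^{\cA}$ evaluated near $1$ can be far smaller than $\sigma_t^{\cA}\bigl(\tfrac{t-1-T/2}{t-1}\bigr)$; the adaptive commitment is therefore essential, and the $h_2$ branch must work.

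This is precisely why the paper uses a three-node star $x_L\leftrightarrow x_B\leftrightarrow x_R$ with $\cH=\{\1{x=x_L},\1{x=x_R}\}$: after Phase 1 on $(x_B,1)$, the adversary picks $\hstar$ to be the singleton on the \emph{less}-weighted leaf (WLOG $x_R$), and then the opposite leaf $x_L$ supplies a realizable negative example ($(x_L,0)$, since its whole out-neighborhood $\{x_L,x_B\}$ is negative under $\hstar$). That negative example is what punishes the learner for labeling $x_B$ or $x_L$ positive, while the agents $(x_B,1)$ and $(x_R,1)$ keep drifting toward $x_B$ or $x_L$ because of the Phase-1 history, with slip probability controlled by $\sigma_t^{\cA}\bigl(\tfrac{t-1-T/2}{t-1}\bigr)$. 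Your Phase-1 idea (feed label-$1$ agents consistent with both hypotheses, then commit adversarially) and your bound on the empirical gap are the right ingredients, but you need a third node so that \emph{both} possible commitments of $\hstar$ leave the adversary with a realizable negative example adjacent to the node the agent is drawn toward.
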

Note that learning/exploration rate is usually set to be $\epsilon_t = \frac{1}{\sqrt T}$ in Multiplicative Weights/$\epsilon$-Greedy when the number of rounds is $T$, then we have $\sum_{t =\frac{T}{2} + 1}^{T} \sigma_{t}^\cA(\frac{t-1-T/2}{t-1}) \geq  \Omega(\sqrt{T})$ for both. This implies that Multiplicative Weights/$\epsilon$-Greedy will suffer infinite number of mistakes as $T$ goes to infinity in the strategic realizable setting.

\begin{proof}
Consider a manipulation graph $G$ shown in Fig~\ref{fig:mean-based}, where there are three nodes and a hypothesis class $\cH = \{\1{x= x_L}, \1{x =x_R}\}$.
\begin{figure}[H]
    \centering
    \begin{tikzpicture}[>=stealth, node distance=1.5cm]
    \node[draw, circle, fill=black, inner sep=1pt, label=below:$x_{B}$] (x1b) at (0,0) {};
    \node[draw, circle, fill=black, inner sep=1pt, label=above:$x_{L}$] (x1l) at (-0.8,1.2) {};
    \node[draw, circle, fill=black, inner sep=1pt, label=above:$x_{R}$] (x1r) at (0.8,1.2) {};
    
    \draw[<->] (x1l) -- (x1b);
    \draw[<->] (x1r) -- (x1b);
    \end{tikzpicture}
    \caption{}
    \label{fig:mean-based}
\end{figure}
For any time horizon $T$, in the first $T/2$ rounds, the adversary always picks agent $(x_t,y_t) = (x_B,1)$, which is consistent with either hypothesis in $\cH$.
After the first $T/2$ rounds, we compare $\bar h_{T/2}(x_L)$ and $\bar h_{T/2}(x_R)$. We pick the target function to be $\hstar = \1{x =x_R}$ if $\bar h_{T/2}(x_L)\geq \bar h_{T/2}(x_R)$, and $\hstar = \1{x =x_L}$ otherwise. 
W.l.o.g., assume that $\bar h_{T/2}(x_L)\geq \bar h_{T/2}(x_R)$ and $\hstar = \1{x =x_R}$. Then at round $t= T/2+1,\ldots,T$, we have $(t-1)\cdot (\bar h_{T/2}(x_R)-\bar h_{T/2}(x_L))\leq t-1-T/2$.

Then at round $t= T/2+1,\ldots,T$, we consider the following cases:
\begin{itemize}
    \item $\bar{h}_t(x_B)>\bar h_t (x_L)$ and $h_t(x_B) = 1$. In this case, the adversary will pick $(x_t,y_t) = (x_L,0)$, then with probability at least $c_t$, the agent will manipulate to the empirically best neighbor $x_B$ and the learner makes a mistake.
    \item $\bar{h}_t(x_B)>\bar h_t (x_L)$ and $h_t(x_B) = 0$. In this case, $\bar{h}_t(x_R)-\bar{h}_t(x_B)< \bar{h}_t(x_R)-\bar{h}_t(x_L)\leq \frac{t-1-T/2}{t-1}$. Then    
    the adversary will pick $(x_t,y_t) = (x_R, 1)$, then with probability at least $\sigma_t^\cA(\frac{t-1-T/2}{t-1})$, the agent will manipulate to the empirically second best neighbor $x_B$ and the learner makes a mistake.
    \item $\bar{h}_t(x_B)\leq \bar h_t (x_L)$ and $h_t(x_L) = 0$. In this case, the adversary will pick $(x_t,y_t) = (x_B, 1)$, then with probability at least $\sigma_t^\cA(\frac{t-1-T/2}{t-1})$, the agent will manipulate to $x_L$ and the learner makes a mistake.
    \item $\bar{h}_t(x_B)\leq \bar h_t (x_L)$ and $h_t(x_L) = 1$. In this case, the adversary will pick $(x_t,y_t) = (x_L, 0)$, then with probability at least $c_t$, the agent will manipulate to $x_L$ and the learner makes a mistake.
\end{itemize}
As a result, in all cases, the learner makes a mistake in each round $t>\frac{T}{2}$ with probability $\Omega(\min\{\sigma_{t}^\cA(\frac{t-1-T/2}{t-1}), c_t\})$. Summing over $t\in[\frac{T}{2}+1,T]$ completes the proof.
\end{proof}

\end{document}